\theoremstyle{definition}
\newtheorem{theorem}{Theorem}[section]
\newtheorem{definition}[theorem]{Definition}
\newtheorem{example}[theorem]{Example}
\newtheorem{corollary}[theorem]{Corollary}
\newtheorem{lemma}[theorem]{Lemma}
\title{Ensemble Copula Coupling as a\\ Multivariate
Discrete Copula Approach}
\author{Roman Schefzik}
\affil{{\em Institute for Applied Mathematics, Heidelberg University}\\ {\em Im Neuenheimer Feld 294, 69120 Heidelberg}\\ {\em e-mail:} \texttt{r.schefzik@uni-heidelberg.de}}
\begin{document}

\maketitle

\begin{abstract}
\noindent In probability and statistics, copulas play important roles theoretically as well as to address a wide 
range of problems in various application areas.\\
In this paper, we introduce the concept of multivariate 
discrete copulas, discuss their equivalence to stochastic arrays, and provide a multivariate discrete version of Sklar's 
theorem.\\
These results provide the theoretical frame for the ensemble copula coupling approach proposed by \cite{Schefzik&2013} for the multivariate statistical postprocessing of weather forecasts made 
by ensemble systems.\\
\\
\textit{Keywords and phrases:} multivariate discrete copula,  
stochastic array,  
Sklar's theorem,  
statistical ensemble postprocessing,  
ensemble copula coupling
\end{abstract}

\section{Introduction}\label{intro}

\noindent Originally introduced by \cite{sklar}, copulas play an important role in probability and statistics whenever modeling of stochastic dependence is required. Roughly speaking, copulas are functions that link multivariate distribution functions to their univariate marginal distribution functions, as is manifested in the famous Sklar's theorem \citep{sklar}. The field of copulas has been developing rapidly over the last decades, and copulas have been applied to a wide range of problems in various areas such as climatology, meteorology and hydrology \citep{moeller,GenFav2007,SchoelzelFriederichs2008,zhang} or econometrics, insurance  and mathematical finance \citep{Cher2004,Embrechts&2003,pfeifernes,gengenbou}. However, copulas are also of immense theoretical interest, due to their appealing mathematical properties. For a general overview of the mathematical theory of copulas, we refer to the textbooks by \cite{joe} and \cite{Nelsen2006}, as well as to the survey paper by \cite{Sempi2011}.\\
A special type of copulas are the so-called discrete copulas, whose properties have been studied by \cite{koles}, \cite{Mayor&2005}, \cite{Mayor&2007} and \cite{Mesiar2005} in recent years. However, the discussion in the papers mentioned above focuses on the bivariate case, and it is natural to search for a treatment of the general multivariate situation. In what follows, we generalize both the notion of discrete copulas and the most important results in this context to the multivariate case, and show to what extent they build the theoretical frame of the ensemble copula coupling (ECC) approach recently proposed by \cite{Schefzik&2013}. ECC is a multivariate statistical postprocessing tool for ensemble weather forecasts in meteorology, which turns out to be based on the theoretical framework discussed here.
\\
The remainder of this paper is organized as follows. In Section \ref{basics}, we introduce the multivariate discrete copula concept. We then point out the connection between multivariate discrete copulas and stochastic arrays \citep{Csima1970,MarchiTaraz1979} in Section \ref{dcstar} and continue with the formulation of a multivariate discrete version of Sklar's theorem in Section \ref{mdcsklar}. Eventually, Section \ref{applecc} deals with the ensemble copula coupling (ECC) approach \citep{Schefzik&2013} and its relationships to the presented results. 

\section{Multivariate discrete copulas}\label{basics}

\noindent First, we transfer the notion of bivariate discrete copulas introduced by \cite{koles} to the general multivariate case. Although our new class of copulas turns out to be a special Fr\'{e}chet class \citep{joe}, we nevertheless give all relevant definitions in detail, as they provide the basic concepts required in the subsequent sections.\newline
\newline
\noindent Let $I_{M}:=\left\{0,\frac{1}{M},\frac{2}{M},\ldots,\frac{M-1}{M},1\right\}$, where $M \in \mathbb{N}$, and $\overline{\mathbb{R}}=\mathbb{R} \cup \{-\infty,\infty\}$. 
\begin{definition}\label{mdc}
A function $D: I_{M}^{L} \rightarrow [0,1]$ is called a discrete copula on $I_{M}^{L}:=\underbrace{I_{M} \times \cdots \times I_{M}}_{L \operatorname{\, times}}$ if it satisfies the following conditions:
\begin{itemize}
\item[(D1)]{$D$ is grounded in the sense that $D\left(\frac{i_1}{M},\ldots,\frac{i_L}{M}\right)=0$
if  $i_{\ell}=0$ for at least one $\ell \in \{1,\ldots,L\}$.}
\item[(D2)]{$D(1,\ldots,1,\frac{i_{\ell}}{M},1,\ldots,1)=\frac{i_{\ell}}{M}$ for all $\ell \in \{1,\ldots,L\}$.}
\item[(D3)]{$D$ is $L$-increasing in the sense that $\Delta_{i_{L}-1}^{i_{L}} \cdots \Delta_{i_{1}-1}^{i_{1}} D (\frac{j_{1}}{M},\ldots,\frac{j_{L}}{M}) \geq 0$ for all $i_{\ell} \in \{1,\ldots,M\}$, $\ell \in \{1,\ldots,L\}$ and $j_{\ell} \in \{0,\ldots,M\}$, where
\begin{eqnarray*}
\Delta_{i_{\ell}-1}^{i_{\ell}} D \left(\frac{j_{1}}{M},\ldots,\frac{j_{L}}{M}\right)&:=&D\left(\frac{j_{1}}{M},\ldots,\frac{j_{\ell-1}}{M},\frac{i_{\ell}}{M},\frac{j_{\ell+1}}{M},\ldots,\frac{j_{L}}{M}\right)\\&&-D\left(\frac{j_{1}}{M},\ldots,\frac{j_{\ell-1}}{M},\frac{i_{\ell}-1}{M},\frac{j_{\ell+1}}{M},\ldots,\frac{j_{L}}{M}\right).
\end{eqnarray*}
}
\end{itemize}
\end{definition}
\begin{definition}\label{irrmdc}
A discrete copula $D:  I_{M}^{L} \rightarrow [0,1]$ is called irreducible if it has minimal range, that is, $\mbox{Ran}(D)=I_{M}$.
\end{definition}
\begin{definition}\label{mdsubcop}
A function $D^{\ast}: J_{M}^{(1)} \times \cdots \times J_{M}^{(L)} \rightarrow [0,1]$ with $\{0,1\} \subset J_{M}^{(1)},\ldots, J_{M}^{(L)} \subset I_{M} $ is called a discrete subcopula if it satisfies the following conditions:
\begin{itemize}
\item[(S1)]{$D^{\ast}\left(\frac{i_1}{M},\ldots,\frac{i_L}{M}\right)=0$
if $i_{\ell}=0$ for at least one $\ell \in \{1,\ldots,L\}$. }
\item[(S2)]{$D^{\ast}(1,\ldots,1,\frac{i_{\ell}}{M},1,\ldots,1)=\frac{i_{\ell}}{M}$ for all $\frac{i_{\ell}}{M} \in J_{M}^{(\ell)}$.}
\item[(S3)]{$\Delta_{i_{L}}^{j_{L}} \cdots \Delta_{i_{1}}^{j_{1}} D^{\ast} \left(\frac{k_{1}}{M},\ldots,\frac{k_{L}}{M}\right) \geq 0$ for all $(\frac{i_{1}}{M},\ldots,\frac{i_{L}}{M}), (\frac{j_{1}}{M},\ldots,\frac{j_{L}}{M}) \in J_{M}^{(1)} \times \cdots \times J_{M}^{(L)}$ such that $i_{\ell} \leq j_{\ell}$ for all $\ell \in \{1,\ldots,L\}$, where
\begin{eqnarray*}
\Delta_{i_{\ell}}^{j_{\ell}} D^{\ast} \left(\frac{k_{1}}{M},\ldots,\frac{k_{L}}{M}\right)&:=&D^{\ast}\left(\frac{k_{1}}{M},\ldots,\frac{k_{\ell-1}}{M},\frac{j_{\ell}}{M},\frac{k_{\ell+1}}{M},\ldots,\frac{k_{L}}{M}\right)\\&&-D^{\ast}\left(\frac{k_{1}}{M},\ldots,\frac{k_{\ell-1}}{M},\frac{i_{\ell}}{M},\frac{k_{\ell+1}}{M},\ldots,\frac{k_{L}}{M}\right).
\end{eqnarray*}
}
\end{itemize}
\end{definition}
\noindent The definition of discrete (sub)copulas can be generalized in the following way: A discrete copula need not necessarily have domain $I_{M}^{L}$, but can generally be defined on $I_{M_{1}} \times \cdots \times I_{M_{L}}$, where $M_{1},\ldots,M_{L} \in \mathbb{N}$ might take distinct values. Then, the axioms (D1), (D2) and (D3) apply analogously to this case. Similarly, discrete subcopulas can generally be defined on $J_{M_{1}}^{(1)} \times \cdots \times J_{M_{L}}^{(L)}$ for possibly distinct numbers $M_{1},\ldots,M_{L} \in \mathbb{N}$, taking account of the conditions in Definition \ref{mdsubcop}.\\
However, for convenience and in view of the application in Section \ref{applecc}, we confine ourselves to the case of $M:=M_{1}=\cdots=M_{L}$ as in the above Definitions \ref{mdc} to \ref{mdsubcop}.
\newline
\noindent Following Chapter 3 in \cite{joe}, a multivariate discrete copula can be interpreted as a multivariate distribution in the Fr\'{e}chet class $\mathcal{F}(F_{\mathcal{U}(I_M)},\ldots,F_{\mathcal{U}(I_M)})$, where $F_{\mathcal{U}(I_M)}$ is the cumulative distribution function (cdf) of a uniformly distributed random variable on $I_M$.
\newline
\noindent Let us now give first explicit examples for multivariate discrete copulas.
\begin{example}\label{pim}
Let $i_1,\ldots,i_M \in \{0,1,\ldots,M\}$.
\begin{enumerate}
\item[(a)]{$\Pi\left(\frac{i_1}{M},\ldots,\frac{i_L}{M}\right):=\prod\limits_{\ell=1}^{L}\frac{i_{\ell}}{M}$ is a discrete copula, the so-called product or independence copula.}
\item[(b)]{${\cal{M}}\left(\frac{i_1}{M},\ldots,\frac{i_L}{M}\right):=\mbox{min}\left\{\frac{i_{1}}{M},\ldots,\frac{i_{L}}{M}\right\}$ is an irreducible discrete copula.}
\end{enumerate}
\end{example}
\noindent Note that $\Pi$ and ${\cal{M}}$ are indeed multivariate discrete copulas because they represent the restrictions of two well-known standard copulas defined on $[0,1]^{L}$ to the discrete set $I_{M}^{L}$.
\begin{example}\label{emcop}
Another example for an irreducible discrete copula  
is given by the so-called empirical copula, which will be very important with respect to the ECC approach, see Section \ref{applecc}. Let  
${\cal{S}}=\{(x_{1}^{1},\ldots,x_{1}^{L}),\ldots,(x_{M}^{1},\ldots,x_{M}^{L})\}$, where  
$x_{m}^{\ell} \in  \mathbb{R}$ for all $m \in \{1,\ldots,M\}$ and $\ell  
\in \{1,\ldots,L\}$ with $x_{m}^{1} \neq x_{n}^{1},\ldots,x_{m}^{L} \neq  
x_{n}^{L}$ for $m \in \{1,\ldots,M\}$, $n \in \{1,\ldots,M\}$, $m \neq n$. That is, we assume for simplicity that there are no ties among the respective samples. Moreover, let  
$x_{(1)}^{1}< \ldots < x_{(M)}^{1},\ldots,x_{(1)}^{L}<\ldots<x_{(M)}^{L}$ be the  
(marginal) order statistics of the collections  
$\{x_{1}^{1},\ldots,x_{M}^{1}\},\ldots,\{x_{1}^{L},\ldots,x_{M}^{L}\}$,  
respectively.\\
Then, the empirical copula $E_{M}:I_{M}^{L} \rightarrow I_{M}$ defined  
from ${\cal{S}}$ is given by
\begin{align*}
E_{M}\left(\frac{i_{1}}{M},\ldots,\frac{i_{L}}{M}\right)&:=  
\begin{cases} 0&\mbox{if }i_{\ell}=0 \mbox{\,\,for at least one } \ell \in  
\{1,\ldots,L\} \\
\frac{\#\{(x_{m}^{1},\ldots,x_{m}^{L}) \in {\cal{S}} | x_{m}^{1} \leq  
x_{(i_{1})}^{1},\ldots,x_{m}^{L} \leq x_{(i_{L})}^{L} \}}{M}&\mbox{if\,  
}i_{\ell} \in \{1,\ldots,M\} \mbox{\,\,for all } \ell \in \{1,\ldots,L\}
\end{cases}
\end{align*}
or, equivalently,
\begin{eqnarray*}
E_{M}\left(\frac{i_{1}}{M},\ldots,\frac{i_{L}}{M}\right) &:=&\frac{1}{M} \sum\limits_{m=1}^{M}\mathds{1}_{\{\operatorname{rk}(x_{m}^{1}) \leq i_1,\ldots,\operatorname{rk}(x_{m}^{L})\leq i_L\}} = \frac{1}{M}  
\sum\limits_{m=1}^{M} \prod\limits_{\ell=1}^{L}  
\mathds{1}_{\left\{\operatorname{rk}(x_{m}^{\ell}) \leq i_{\ell}\right\}},
\end{eqnarray*}
where $\mbox{rk}(x_{m}^{\ell})$ denotes the rank of $x_{m}^{\ell}$ in  
$\{x_{1}^{\ell},\ldots,x_{M}^{\ell}\}$ for $\ell \in \{1,\ldots,L\}$, and $ i_{1},\ldots,i_{L} \in \{0,1,\ldots,M\}$, compare \cite{deheuvels}.
\\
Obviously, the empirical copula is an irreducible discrete copula.  
Conversely, any irreducible discrete copula is the empirical copula of  
some set ${\cal{S}}$, as discussed in Example \ref{exequiv} (c) in  
Section \ref{dcstar}.
\end{example}

\section{A characterization of multivariate discrete copulas via stochastic arrays}\label{dcstar}

\noindent According to \cite{koles} and \cite{Mayor&2005}, there is a one-to-one correspondence between discrete copulas and bistochastic matrices in the bivariate case. We now formulate a similar characterization for multivariate discrete copulas. To this end, the notion of stochastic arrays \citep{Csima1970,MarchiTaraz1979} turns out to be very useful.
\begin{definition}\label{stocharr}
An array $A=(a_{i_{1} \ldots i_{L}})_{i_{1},\ldots,i_{L}=1}^{M}$ is called an $L$-dimensional stochastic array (or an $L$-stochastic matrix) of degree $L-1$ if 
\begin{itemize}
\item[(A1)]{$a_{i_{1} \ldots i_{L}} \geq 0$ for all $i_{1},\ldots,i_{L} \in \{1,\ldots,M\}$}
\item[(A2)]{$\sum\limits_{i_{1},\ldots,i_{\ell-1},i_{\ell+1},\ldots,i_{L}}a_{i_{1} \ldots i_{L}}:=\sum\limits_{i_{1}=1}^{M} \cdots \sum\limits_{i_{\ell-1}=1}^{M} \sum\limits_{i_{\ell+1}=1}^{M} \cdots \sum\limits_{i_{L}=1}^{M} a_{i_{1} \ldots i_{\ell-1} i_{\ell} i_{\ell+1} \ldots i_{L}}=1$ \\ for $i_{\ell} \in \{1,\ldots,M\}$, where $\ell \in \{1,\ldots,L\}$, and the summation is over $i_{1},\ldots,i_{\ell-1},i_{\ell+1},\ldots,i_{L} \in \{1,\ldots,M\}$.}
\end{itemize}
\noindent As a special case, an $L$-dimensional stochastic array $A$ is called an $L$-dimensional permutation array (or an $L$-permutation matrix) if the entries of $A$ only take the values 0 and 1, that is, $a_{i_{1} \ldots i_{L}} \in \{0,1\}$ for all $i_{1},\ldots,i_{L} \in \{1,\ldots,M\}$.
\end{definition}
\begin{theorem}\label{equiv}
Let $D:I_{M}^{L} \rightarrow [0,1]$. Then, the following statements are equivalent:
\begin{enumerate}
\item[(1)]{$D$ is a discrete copula.}
\item[(2)]{There exists an $L$-dimensional stochastic array $A=(a_{i_{1} \ldots i_{L}})_{i_{1},\ldots,i_{L}=1}^{M}$ such that
\begin{equation}\label{stochm}
D \left(\frac{i_{1}}{M},\ldots,\frac{i_{L}}{M}\right)=\frac{1}{M} \sum\limits_{j_{1}=1}^{i_{1}} \cdots \sum\limits_{j_{L}=1}^{i_{L}} a_{j_{1} \ldots j_{L}}
\end{equation}
for $i_{1},\ldots,i_{L} \in \{0,1,\ldots,M\}$.
}
\end{enumerate}
\end{theorem}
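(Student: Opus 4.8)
The plan is to establish the equivalence in two directions, handling the elementary implication $(2) \Rightarrow (1)$ first and then devoting the main effort to the substantive direction $(1) \Rightarrow (2)$.

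For $(2) \Rightarrow (1)$, I would assume that $D$ is given by the cumulative-sum formula \eqref{stochm} for some stochastic array $A$ and verify the three defining axioms in turn. Grounding (D1) is immediate: if some $i_\ell = 0$, the corresponding summation index $j_\ell$ ranges over the empty set $\{1,\ldots,0\}$, so the entire sum vanishes. For the marginal normalization (D2), I would set $i_k = M$ for all $k \neq \ell$; summing each of the $L-1$ coordinates $j_k$ over the full range $\{1,\ldots,M\}$ and invoking (A2) collapses those inner sums to $1$, leaving $\frac{1}{M}\sum_{j_\ell=1}^{i_\ell} 1 = \frac{i_\ell}{M}$. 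For the $L$-increasing property (D3), the key observation is that a first-order difference $\Delta_{i_\ell-1}^{i_\ell}$ applied to a cumulative sum in the $\ell$-th coordinate peels off the top layer; iterating this across all $L$ coordinates extracts a single entry, giving $\Delta_{i_L-1}^{i_L}\cdots\Delta_{i_1-1}^{i_1} D(\frac{j_1}{M},\ldots,\frac{j_L}{M}) = \frac{1}{M}\,a_{i_1\ldots i_L}$, which is nonnegative by (A1).

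For $(1) \Rightarrow (2)$, I would define the candidate array by inverting this last relation, namely
\[
a_{i_1\ldots i_L} := M\,\Delta_{i_L-1}^{i_L}\cdots\Delta_{i_1-1}^{i_1} D\left(\frac{i_1}{M},\ldots,\frac{i_L}{M}\right)
\]
for $i_1,\ldots,i_L \in \{1,\ldots,M\}$. Nonnegativity (A1) is then immediate from (D3). The reconstruction identity \eqref{stochm} follows by iterated telescoping: summing $\Delta_{j_\ell-1}^{j_\ell}$ over $j_\ell$ from $1$ to $i_\ell$ collapses to the difference of $D$ evaluated at $\frac{i_\ell}{M}$ and at $0$ in the $\ell$-th slot, and the latter boundary term vanishes by grounding (D1); carrying this out successively in all $L$ coordinates recovers $D(\frac{i_1}{M},\ldots,\frac{i_L}{M})$ exactly. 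Finally, for the stochasticity condition (A2) I would fix $\ell$ together with the index $i_\ell$, set all remaining arguments to $1$ in the reconstruction formula, and use (D2) to obtain $\sum_{j_1=1}^{M}\cdots\sum_{j_\ell=1}^{i_\ell}\cdots\sum_{j_L=1}^{M} a_{j_1\ldots j_L} = i_\ell$; subtracting the corresponding identity for $i_\ell-1$ leaves precisely the marginal sum in (A2) equal to $i_\ell - (i_\ell-1) = 1$.

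The main obstacle is the bookkeeping in the telescoping step of $(1)\Rightarrow(2)$: one must confirm that the difference and summation operators acting on distinct coordinates commute, and track carefully that every boundary term generated along the way has at least one coordinate equal to $0$, so that grounding annihilates it. This is the discrete multivariate analogue of the fundamental theorem of calculus, and while each individual telescoping is routine, organizing the $L$-fold iteration cleanly — rather than drowning in multi-index notation — is the genuine content of the argument. Everything else reduces to substituting the appropriate boundary values into \eqref{stochm} and reading off the defining axioms.
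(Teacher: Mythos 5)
Your proposal is correct and follows exactly the route the paper indicates: the paper omits a detailed proof, saying only that one verifies the axioms (A1), (A2), (D1), (D2), (D3) through multiple-sum calculations and that the stochastic array can be identified with $M$ times the pmf --- which is precisely your construction $a_{i_1\ldots i_L} = M\,\Delta_{i_L-1}^{i_L}\cdots\Delta_{i_1-1}^{i_1}D\left(\frac{i_1}{M},\ldots,\frac{i_L}{M}\right)$. Your write-up simply supplies the telescoping and boundary-term bookkeeping that the paper declines to spell out.
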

\begin{corollary}\label{corequiv}
$D$ is an irreducible discrete copula if and only if there is an $L$-dimensional permutation array $A=(a_{i_{1} \ldots i_{L}})_{i_{1},\ldots,i_{L}=1}^{M}$ such that $\eqref{stochm}$ holds for $i_{1},\ldots,i_{L} \in \{0,1,\ldots,M\}$.
\end{corollary}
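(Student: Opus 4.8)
The plan is to build directly on Theorem~\ref{equiv}, which already provides a one-to-one correspondence between discrete copulas $D$ and stochastic arrays $A$ via the representation \eqref{stochm}. That theorem also lets one recover the array from $D$ by inverting \eqref{stochm}, namely
\[
a_{j_1\ldots j_L} = M\,\Delta_{j_L-1}^{j_L}\cdots\Delta_{j_1-1}^{j_1} D\!\left(\tfrac{j_1}{M},\ldots,\tfrac{j_L}{M}\right),
\]
which is $M$ times an alternating sum of the $2^L$ values of $D$ at the corners of the cell indexed by $(j_1,\ldots,j_L)$. Since a permutation array is by Definition~\ref{stocharr} exactly a stochastic array whose entries lie in $\{0,1\}$, it suffices to show that, under this correspondence, $D$ is irreducible precisely when every $a_{j_1\ldots j_L}\in\{0,1\}$.

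First I would record the key observation that $I_M\subseteq\mathrm{Ran}(D)$ for \emph{every} discrete copula: condition (D2) gives $D(1,\ldots,1,\tfrac{i}{M},1,\ldots,1)=\tfrac{i}{M}$ for $i\in\{0,\ldots,M\}$, so each element of $I_M$ is already attained. Consequently irreducibility, i.e. $\mathrm{Ran}(D)=I_M$ (Definition~\ref{irrmdc}), is equivalent to the reverse inclusion $\mathrm{Ran}(D)\subseteq I_M$, which in turn is equivalent to $M\cdot D(\tfrac{i_1}{M},\ldots,\tfrac{i_L}{M})\in\mathbb{Z}$ at every grid point. This reformulation of ``minimal range'' as an integrality condition is the conceptual heart of the argument.

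For the direction ``permutation array $\Rightarrow$ irreducible'', I would note that if all $a_{j_1\ldots j_L}\in\{0,1\}$ then every partial sum in \eqref{stochm} is a nonnegative integer, so each $D$-value lies in $I_M$; together with $I_M\subseteq\mathrm{Ran}(D)$ this yields $\mathrm{Ran}(D)=I_M$. For the converse, if $D$ is irreducible then $M\cdot D(\cdot)\in\mathbb{Z}$ at every grid point, and substituting this into the inversion formula above expresses each $a_{j_1\ldots j_L}$ as an integer combination of integers, hence an integer. The final step is purely combinatorial: each entry is a nonnegative integer by (A1), while axiom (A2) forces every axis-parallel line sum to equal $1$; a line of nonnegative integers summing to $1$ must consist of a single $1$ and otherwise zeros, so in particular every entry is $0$ or $1$, i.e. $A$ is a permutation array.

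I expect the main obstacle to be conceptual rather than computational, namely recognising that (D2) already guarantees $I_M\subseteq\mathrm{Ran}(D)$, so that the ``minimal range'' requirement collapses to integrality of the scaled copula values. Once that is seen, passing between integrality of the $D$-values and integrality of the array entries is immediate from the bijection and its inversion in Theorem~\ref{equiv}, and the reduction of nonnegative integer line sums equal to $1$ to $\{0,1\}$-valued entries is routine.
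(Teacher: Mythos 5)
Your proof is correct, and it is essentially the argument the paper intends but never writes down: the paper omits the proofs of both Theorem~\ref{equiv} and this corollary (describing them only as ``straightforward but tedious''), so your write-up supplies exactly the missing reasoning. Its two ingredients --- the inversion
\[
a_{j_1\ldots j_L}=M\,\Delta_{j_L-1}^{j_L}\cdots\Delta_{j_1-1}^{j_1}D\!\left(\tfrac{j_1}{M},\ldots,\tfrac{j_L}{M}\right)
\]
of \eqref{stochm}, and the observation that (D2) already forces $I_M\subseteq\mathrm{Ran}(D)$, so that irreducibility collapses to integrality of $M\cdot D$ at the grid points --- match the paper's informal chain of equivalences (irreducible discrete copula $\Leftrightarrow$ $M$ point masses of $\tfrac{1}{M}$ each $\Leftrightarrow$ permutation array). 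One small correction: under Definition~\ref{stocharr}, axiom (A2) fixes a \emph{single} index $i_\ell$ and sums over the remaining $L-1$ indices, so for $L\geq 3$ it constrains hyperplane (slice) sums rather than axis-parallel line sums as you state; this does not affect your final step, since a nonnegative integer array whose slice sums all equal $1$ still has every entry in $\{0,1\}$, but the wording should be fixed. You also implicitly use $D\leq 1$ (available since Theorem~\ref{equiv} makes $D$ a discrete copula with values in $[0,1]$, or because the total sum of the array entries is $M$) when concluding that partial sums land in $I_M$; it would be worth making that half-line explicit.
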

\noindent The proof of Theorem \ref{equiv} basically consists in showing the validity of the axioms (A1), (A2), (D1), (D2) and (D3) in Definitions \ref{stocharr} and \ref{mdc}, respectively. This is on the one hand straightforward, but on the other hand rather tedious, involving several calculations of multiple sums. We omit a detailed proof and stress that Theorem \ref{equiv} can also be interpreted as a reformulation of the relation between the cdf and the probabiliy mass function (pmf) \citep{Xu1996} because the stochastic array in Definition \ref{stocharr} can be identified with $M$ times the pmf.\newline
Essentially, Theorem \ref{equiv} yields the following equivalences:
\begin{center}
discrete copula $\Leftrightarrow$ marginal distributions concentrated on $\left\{\frac{1}{M},\frac{2}{M},\ldots,1 \right\}$ $\Leftrightarrow$ probability masses on $\left\{\frac{1}{M},\frac{2}{M},\ldots,1 \right\}^L$ $\Leftrightarrow$ stochastic array.
\end{center}
In the situation of Corollary \ref{corequiv}, we have
\begin{center}
irreducible discrete copula $\Leftrightarrow$ empirical copula $\Leftrightarrow$ $M$ point masses of $\frac{1}{M}$ each $\Leftrightarrow$ permutation array $\Leftrightarrow$ Latin hypercube of order $M$ in $L$ dimensions \citep{Gupta1974}.
\end{center}
Illustrations of these equivalences are given in Section \ref{applecc}, where we discuss their relevance with respect to the ECC approach of \cite{Schefzik&2013}.
\begin{example}\label{exequiv}
$\hfill$
\begin{enumerate}
\item[(a)]{The discrete product copula $\Pi\left( \frac{i_{1}}{M},\ldots,\frac{i_{L}}{M}\right) = \prod\limits_{\ell=1}^{L}\frac{i_{\ell}}{M}$, where $(\frac{i_{1}}{M},\ldots,\frac{i_{L}}{M})\in I_{M}^{L}$, in Example \ref{pim} (a) corresponds to the $L$-dimensional stochastic array $A:=(a_{i_{1} \ldots i_{L}})_{i_{1},\ldots,i_{L}=1}^{M}=\left(\frac{1}{M^{L-1}}\right)_{i_{1},\ldots,i_{L}=1}^{M}$ whose entries are all equal to $\frac{1}{M^{L-1}}$. Indeed,
\begin{eqnarray*}
\frac{1}{M} \sum\limits_{j_{1}=1}^{i_{1}} \cdots \sum\limits_{j_{L}=1}^{i_{L}}\frac{1}{M^{L-1}} &=& \frac{1}{M} \sum\limits_{j_{1}=1}^{i_{1}} \cdots \sum\limits_{j_{L-1}=1}^{i_{L-1}}\frac{i_{L}}{M^{L-1}}
=\frac{1}{M} \cdot \frac{i_{1}\cdot \ldots \cdot i_{L-1}\cdot i_{L}}{M^{L-1}}\\  &=& \frac{i_{1} \cdot \ldots \cdot i_{L}}{M^{L}}
= \prod\limits_{\ell=1}^{L}\frac{i_{\ell}}{M}
= \Pi\left(\frac{i_{1}}{M},\ldots,\frac{i_{L}}{M}\right).
\end{eqnarray*}
}
\item[(b)]{The irreducible discrete copula ${\cal{M}}\left( \frac{i_{1}}{M},\ldots,\frac{i_{L}}{M}\right)= \mbox{min}\left\{ \frac{i_{1}}{M},\ldots,\frac{i_{L}}{M}\right\} $, where $(\frac{i_{1}}{M},\ldots,\frac{i_{L}}{M})\in I_{M}^{L}$, in Example \ref{pim} (b) corresponds to the $L$-dimensional identity stochastic array
\begin{equation*}
\mathbb{I}:=(a_{i_{1} \ldots i_{L}})_{i_{1},\ldots,i_{L}=1}^{M}, \mbox{\,\, where\,\,} a_{i_{1} \ldots i_{L}}=\begin{cases} 1&\mbox{if\,}i_{1}=\ldots=i_{L} \\ 0& \mbox{otherwise}.\end{cases}
\end{equation*}
Indeed, employing the definition and writing down the corresponding multiple sum explicitly yields
\begin{equation*}
\frac{1}{M}\sum\limits_{j_{1}=1}^{i_{1}} \cdots \sum\limits_{j_{L}=1}^{i_{L}}a_{j_{1} \ldots j_{L}} = \frac{1}{M} \cdot \mbox{min}\{i_{1},\ldots,i_{L}\}= \mbox{min}\left\{\frac{i_{1}}{M},\ldots,\frac{i_{L}}{M}\right\}
={\cal{M}}\left(\frac{i_{1}}{M},\ldots,\frac{i_{L}}{M}\right).
\end{equation*}
}
\item[(c)]{The empirical copula $E_{M}$ in Example \ref{emcop}, which is an  
irreducible discrete copula, corresponds to the $L$-dimensional  
permutation array $A=(a_{i_{1} \ldots i_{L}})_{i_{1},\ldots,i_{L}=1}^{M}$ with
\begin{equation*}
a_{i_{1} \ldots i_{L}}=\begin{cases} 1&\mbox{if } (x_{(i_{1})}^{1},\ldots,x_{(i_{L})}^{L}) \in  
{\cal{S}} \\
0&\mbox{if } (x_{(i_{1})}^{1},\ldots,x_{(i_{L})}^{L}) \notin {\cal{S}}.
\end{cases}
\end{equation*}
Conversely, for an irreducible discrete copula $D$ with associated  
$L$-dimensional permutation array\newline  
$A=(a_{i_{1} \ldots i_{L}})_{i_{1},\ldots,i_{L}=1}^{M}$, we consider the sets  
${\cal{X}}^{1}=\{x_{1}^{1}< \ldots <x_{M}^{1}\},\ldots,{\cal{X}}^{L}=\{x_{1}^{L}< \ldots <x_{M}^{L}\}$. Then, $D$ is the empirical copula of the set ${\cal{S}}=\{(x_{i_{1}}^{1},\ldots,x_{i_{L}}^{L}) | a_{i_{1} \ldots i_{L}}=1\}$.
}
\end{enumerate}
\end{example}

\section{A multivariate discrete version of Sklar's theorem}\label{mdcsklar}   

\noindent The most important result in the context of copulas is Sklar's theorem, see \cite{Nelsen2006} or \cite{sklar}. Our goal is now to prove a multivariate discrete version of Sklar's theorem, where the following extension lemma will play an essential role. A bivariate variant of this result has been shown by \cite{Mayor&2007}.
\begin{lemma} \label{extension} 
(Extension lemma) For each irreducible discrete subcopula $D^{\ast}: J_{M}^{(1)} \times \cdots \times J_{M}^{(L)} \rightarrow I_{M}$, there is an irreducible discrete copula $D: I_{M} \times \cdots \times I_{M} \rightarrow I_{M}$ such that
\begin{equation*}
D|_{J_{M}^{(1)} \times \cdots \times J_{M}^{(L)}} = D^{\ast},
\end{equation*} 
that is, the restriction of $D$ to $J_{M}^{(1)} \times \cdots \times J_{M}^{(L)}$ coincides with $D^{\ast}$.
\end{lemma}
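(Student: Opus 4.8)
The plan is to prove the lemma by a grid-refinement induction, using the permutation-array picture of Corollary~\ref{corequiv} only as a guide for intuition. The invariant I would maintain is: at each stage we have an \emph{irreducible} discrete subcopula defined on a product sub-grid, and I enlarge the domain one coordinate value at a time until it becomes all of $I_{M}^{L}$, at which point the subcopula is a genuine irreducible discrete copula. Concretely, I would induct on the number of missing grid values $N:=\sum_{\ell=1}^{L}\bigl(M+1-|J_{M}^{(\ell)}|\bigr)$. If $N=0$ then every $J_{M}^{(\ell)}=I_{M}$ and $D^{\ast}$ is already the desired $D$. For the inductive step I fix a coordinate $\ell$ and a value $\tfrac{a}{M}\in I_{M}\setminus J_{M}^{(\ell)}$ with neighbours $\tfrac{a^{-}}{M}<\tfrac{a}{M}<\tfrac{a^{+}}{M}$ in $J_{M}^{(\ell)}$, extend $D^{\ast}$ to the sub-grid obtained by adjoining $\tfrac{a}{M}$ in the $\ell$-th coordinate, and then invoke the inductive hypothesis on this one-step-larger subcopula.

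It is cleanest to phrase the single refinement in terms of box masses rather than function values. By (S3), the alternating corner sum $m(B):=M\cdot\Delta\cdots\Delta D^{\ast}(B)$ attached to each coarse box $B$ of the current grid is a nonnegative integer, nonnegative by (S3) and integral because $\mbox{Ran}(D^{\ast})\subseteq I_{M}$; by (S2) these masses have uniform one-dimensional margins. Adjoining $\tfrac{a}{M}$ splits each coarse box $B$ spanning the slab $\bigl[\tfrac{a^{-}}{M},\tfrac{a^{+}}{M}\bigr]$ in coordinate $\ell$ into a lower piece $B_{1}$ and an upper piece $B_{2}$, and defining the extension amounts to choosing nonnegative integers $m(B_{1}),m(B_{2})$ with $m(B_{1})+m(B_{2})=m(B)$. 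The only additional constraint comes from (S2) in coordinate $\ell$, which forces the total lower mass $\sum_{B}m(B_{1})$ to equal $a-a^{-}$; since $0\le a-a^{-}\le a^{+}-a^{-}=\sum_{B}m(B)$, such a split always exists, for instance by distributing the target greedily across the boxes. Cumulating these masses as in \eqref{stochm} defines the new slice values, which lie in $I_{M}$ because they are integer multiples of $\tfrac1M$.

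I would then verify that the refined function is again an irreducible discrete subcopula. Groundedness (S1) and the margin condition (S2) in the untouched coordinates are inherited from $D^{\ast}$, since the value $1$ is retained in every $J_{M}^{(\ell')}$; (S2) in coordinate $\ell$ holds at the new point by the total-mass constraint; and $L$-increasingness (S3) reduces to nonnegativity of the minimal box masses, which is exactly $m(B_{1}),m(B_{2})\ge0$. Irreducibility is automatic: the refined subcopula restricts to $D^{\ast}$, whose range is already $I_{M}$, while its own range is contained in $I_{M}$, so the two coincide. Finally, the identity $D|_{J_{M}^{(1)}\times\cdots\times J_{M}^{(L)}}=D^{\ast}$ at the very end follows by telescoping, because at each refinement the old grid values are never altered.

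The step to be careful about is the single refinement, and the instructive point is \emph{why} it is unobstructed. A direct attempt to realise $D^{\ast}$ by completing a partial $L$-dimensional permutation array — the combinatorial counterpart of $D$ under Corollary~\ref{corequiv} — could in principle face genuine obstructions, since higher-dimensional analogues of partial Latin completion can fail. Working with the aggregated box masses $m(B)$ instead of with individual unit cells sidesteps this: the only possible obstruction is a feasibility conflict in splitting $m(B)$, and the margin identity from (S2) together with the nonnegativity from (S3) rules it out. The remaining content is the routine but lengthy bookkeeping verifying (S1)--(S3) after each refinement, which I would carry out once for a generic step and then apply inductively, in the same spirit in which the proof of Theorem~\ref{equiv} was abbreviated.
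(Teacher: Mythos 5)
Your proposal is correct, but its architecture is genuinely different from the paper's. The paper proves the lemma in one shot through the stochastic-array equivalence: by Theorem~\ref{equiv} and Corollary~\ref{corequiv} it suffices to build an $L$-dimensional permutation array whose block sums over the coarse boxes determined by $J_{M}^{(1)} \times \cdots \times J_{M}^{(L)}$ equal $M$ times the corresponding $D^{\ast}$-volumes, and essentially the whole proof is the telescoping computation (equality $(\ast)$ there, resting on (S1) and (S2)) showing that in every coordinate-$k$ slab the prescribed number of 1's equals the number $a_{s_{k}+1}^{(k)}-a_{s_{k}}^{(k)}$ of available lines; the actual placement of the 1's inside the blocks, i.e.\ that slab-consistent integer block counts are always realizable by a permutation array, is left implicit. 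You instead refine the grid one missing value at a time, maintaining an integer-valued subcopula as the inductive invariant, and each refinement is an integer-splitting problem that is feasible by inspection since $0 \le a-a^{-} \le a^{+}-a^{-}=\sum_{B} m(B)$. What your route buys is precisely the step the paper glosses over: you never need the combinatorial realization of prescribed block counts, because a single-coordinate split is unobstructed. What it costs is the repeated (routine) verification of (S1)--(S3) after each refinement, and a final appeal to the definition on the full grid rather than to \eqref{stochm}. Note, however, that you do not avoid the paper's computation: your assertion that the slab masses sum to $a^{+}-a^{-}$ (``by (S2) these masses have uniform one-dimensional margins'') is exactly the paper's equality $(\ast)$, proved by the same cancellation argument from groundedness and the margin condition, so the computational core is shared and only the surrounding structure differs.

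One small repair: your irreducibility argument claims the intermediate subcopulas have range equal to $I_{M}$ because $\mbox{Ran}(D^{\ast})=I_{M}$. For a subcopula on a sparse grid this can fail (take $J_{M}^{(\ell)}=\{0,1\}$ for every $\ell$, where the only values are $0$ and $1$), and indeed elsewhere you correctly use only $\mbox{Ran}(D^{\ast}) \subseteq I_{M}$, which is what integrality of the masses $m(B)$ requires. The invariant to carry through the induction is simply that all values are integer multiples of $\frac{1}{M}$; irreducibility of the final $D$ in the sense of Definition~\ref{irrmdc} then follows automatically, since once the domain is all of $I_{M}^{L}$ the margin condition (D2) forces $\mbox{Ran}(D) \supseteq I_{M}$, hence $\mbox{Ran}(D)=I_{M}$.
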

\begin{proof} 
Let 
\begin{equation*}
J_{M}^{(\ell)} := \left\{0=\frac{a_{0}^{(\ell)}}{M} < \frac{a_{1}^{(\ell)}}{M} < \ldots < \frac{a_{r_{\ell}}^{(\ell)}}{M} < \frac{a_{r_{\ell}+1}^{(\ell)}}{M}=1 \right\}
\end{equation*}
for all $\ell \in \{1,\ldots,L\}$, with the corresponding equivalent sets
\begin{equation*}
K_{M}^{(\ell)} := \{0=a_{0}^{(\ell)} < a_{1}^{(\ell)} < \ldots < a_{r_{\ell}}^{(\ell)} <a_{r_{\ell}+1}^{(\ell)}=M \}.
\end{equation*}
To get an irreducible discrete extension copula $D$ of an irreducible discrete subcopula $D^{\ast}$, according to Theorem \ref{equiv}, it suffices to construct an $L$-dimensional permutation array $A$ such that each block specified by the points $(a_{s_{1}}^{(1)},a_{s_{2}}^{(2)},\ldots,a_{s_{L}}^{(L)})$ and $(a_{s_{1}+1}^{(1)},a_{s_{2}+1}^{(2)},\ldots,a_{s_{L}+1}^{(L)})$, which consists of the lines from 
$a_{s_{1}}^{(1)}+1$ to $a_{s_{1}+1}^{(1)}$, from $a_{s_{2}}^{(2)}+1$ to $a_{s_{2}+1}^{(2)}$, and so forth, up to the line from $a_{s_{L}}^{(L)}+1$ to $a_{s_{L}+1}^{(L)}$, contains a number of 1's equal to the volume
\begin{equation*}
M \cdot \left(\Delta_{a_{s_{L}}^{(L)}}^{a_{s_{L}+1}^{(L)}} \cdots \Delta_{a_{s_{1}}^{(1)}}^{a_{s_{1}+1}^{(1)}} D^{\ast} \left(\frac{j_{1}}{M},\ldots,\frac{j_{L}}{M}\right)\right),
\end{equation*}
where $s_{\ell} \in \{0,\ldots,r_{\ell}\}$ and $\ell \in \{1,\ldots,L\}$.\\
To show the existence of such a permutation array, let $k \in \{1,\ldots,L\}$ be fixed and consider the subarray specified by the lines $a_{s_{k}}^{(k)}+1$ and $a_{s_{k}+1}^{(k)}$ of the permutation array $A$. This subarray contains all the blocks determined by the points $(a_{s_{1}}^{(1)},\ldots,a_{s_{L}}^{(L)})$ and $(a_{s_{1}+1}^{(1)},\ldots,a_{s_{L}+1}^{(L)})$ for all $s_{\ell} \in \{0,\ldots,r_{\ell}\}$, where $\ell \in \{1,\ldots,L\} \setminus \{k\}$.\\
We need to show that the number $a_{s_{k}+1}^{(k)} - a_{s_{k}}^{(k)}$ of lines in this subarray is equal to the number of 1's corresponding to all those blocks. This indeed holds as
\begin{eqnarray*}
&&\sum\limits_{s_{1}=0}^{r_{1}} \cdots \sum\limits_{s_{k-1}=0}^{r_{k-1}}\sum\limits_{s_{k+1}=0}^{r_{k+1}} \cdots \sum\limits_{s_{L}=0}^{r_{L}} M \cdot \left(\Delta_{a_{s_{L}}^{(L)}}^{a_{s_{L}+1}^{(L)}} \cdots \Delta_{a_{s_{1}}^{(1)}}^{a_{s_{1}+1}^{(1)}} D^{\ast} \left(\frac{j_{1}}{M},\ldots,\frac{j_{L}}{M}\right)\right)\\
&=&M \cdot \underbrace{\sum\limits_{s_{1}=0}^{r_{1}} \cdots \sum\limits_{s_{k-1}=0}^{r_{k-1}}\sum\limits_{s_{k+1}=0}^{r_{k+1}} \cdots \sum\limits_{s_{L}=0}^{r_{L}}  \left(\Delta_{a_{s_{L}}^{(L)}}^{a_{s_{L}+1}^{(L)}} \cdots \Delta_{a_{s_{1}}^{(1)}}^{a_{s_{1}+1}^{(1)}} D^{\ast} \left(\frac{j_{1}}{M},\ldots,\frac{j_{L}}{M}\right)\right)}_{=:S} \\
&\stackrel{(\ast)}{=}& M \cdot \left( \Delta_{a_{s_{k}}^{(k)}}^{a_{s_{k}+1}^{(k)}} D^{\ast}\left(1,\ldots,1,\frac{j_{k}}{M},1,\ldots,1 \right)\right)\\
&=& M \cdot \left(D^{\ast} \left(1,\ldots,1,\frac{a_{s_{k}+1}^{(k)}}{M},1,\ldots,1 \right)- D^{\ast} \left(1,\ldots,1,\frac{a_{s_{k}}^{(k)}}{M},1,\ldots,1 \right) \right)\\
&\overset{\operatorname{Def. \, 2}}{\underset{\operatorname{(S2)}}{=}}& M \cdot \left(\frac{a_{s_{k}+1}^{(k)}}{M}-\frac{a_{s_{k}}^{(k)}}{M} \right)\\
&=&a_{s_{k}+1}^{(k)}-a_{s_{k}}^{(k)}.
\end{eqnarray*}
To see equality ($\ast$) in this connection, we let $\ell \in \{1,\ldots,L\} \setminus \{k\}$ be fixed and first consider the sum
\begin{equation*}
T:=\sum\limits_{s_\ell=0}^{r_\ell} \left(\Delta_{a_{s_{L}}^{(L)}}^{a_{s_{L}+1}^{(L)}} \cdots \Delta_{a_{s_{1}}^{(1)}}^{a_{s_{1}+1}^{(1)}} D^{\ast} \left(\frac{j_1}{M},\ldots,\frac{j_L}{M} \right) \right).
\end{equation*}
Writing down $T$ explicitly yields that all of the $(r_{\ell}+1) \cdot 2^L$ terms $D(\cdot,\ldots,\cdot)$ of $T$ cancel out except for those $2^L$ having a 0 or a 1 in the $\ell$-th component, which indeed occurs as $a_{0}^{(\nu)}=0$ and $a_{r_{\ell}+1}^{(\nu)}=M$ for $\nu \in \{1,\ldots,L\}$. According to property (S1) in Definition \ref{mdsubcop}, all the $2^{L-1}$ terms having a 0 in the $\ell$-th component vanish, and it remains
\begin{equation*}
T=\Delta_{a_{s_{L}}^{(L)}}^{a_{s_{L}+1}^{(L)}} \cdots \Delta_{a_{s_{\ell+1}}^{(\ell+1)}}^{a_{s_{\ell+1}+1}^{(\ell+1)}} \Delta_{a_{s_{\ell-1}}^{(\ell-1)}}^{a_{s_{\ell-1}+1}^{(\ell-1)}} \cdots \Delta_{a_{s_{1}}^{(1)}}^{a_{s_{1}+1}^{(1)}} D^{\ast} \left(\frac{j_1}{M},\ldots,\frac{j_{\ell-1}}{M},1,\frac{j_{\ell+1}}{M},\ldots,\frac{j_L}{M} \right).
\end{equation*}
By applying this iteratively and using again property (S1) in Definition \ref{mdsubcop}, we finally see that all but two of the terms $D(\cdot,\ldots,\cdot)$ of $S$ either vanish or cancel out, such that
\begin{equation*}
S=\Delta_{a_{s_{k}}^{(k)}}^{a_{s_{k}+1}^{(k)}} D^{\ast} \left(1,\ldots,1,\frac{j_k}{M},1,\ldots,1 \right),
\end{equation*}
and ($\ast$) is thus shown.\newline
Hence, we have proved that an irreducible discrete subcopula $D^{\ast}$ can be extended to an irreducible discrete copula $D$.
\end{proof}
\noindent Note that the extension proposed in Lemma \ref{extension} is in general not uniquely determined. In the case of non-uniqueness, there are a largest and a smallest discrete extension copula $D_{lar}$ and $D_{sm}$, respectively, in the sense that
\begin{equation*}
D_{lar}\left(\frac{i_{1}}{M},\ldots,\frac{i_{L}}{M}\right) \geq D\left(\frac{i_{1}}{M},\ldots,\frac{i_{L}}{M}\right) \geq D_{sm}\left(\frac{i_{1}}{M},\ldots,\frac{i_{L}}{M}\right) \mbox{\,\,for \,all\,\,} \left(\frac{i_{1}}{M},\ldots,\frac{i_{L}}{M}\right) \in I_{M}^{L}
\end{equation*}
for any other discrete extension copula $D$ of $D^{\ast}$.
\\
\\
With Lemma \ref{extension}, we are now ready to state and prove a multivariate discrete version of Sklar's theorem. For the bivariate case, such a result can be found in \cite{Mayor&2007}. 
\begin{theorem}\label{sklarmdc} 
(Sklar's theorem in the multivariate discrete case)
\begin{enumerate}
\item{Let $F_{1},\ldots,F_{L}$ be distribution functions with $\mbox{Ran}(F_{\ell}) \subseteq I_{M}$ for all $\ell \in \{1,\ldots,L\}$. If $D$ is an irreducible discrete copula on $I_{M}^{L}$, then
\begin{equation}\label{sklar}
H(x_{1},\ldots,x_{L})=D(F_{1}(x_{1}),\ldots,F_{L}(x_{L})) \, \mbox{\,\,for \,\,} (x_{1},\ldots,x_{L}) \in \overline{\mathbb{R}}^{L}
\end{equation}
is a joint distribution function with $\mbox{Ran}(H) \subseteq I_{M}$, having $F_{1},\ldots,F_{L}$ as marginal distribution functions.
}
\item{Conversely, if $H$ is a joint distribution function with marginal distribution functions $F_{1},\ldots,F_{L}$ and  $\mbox{Ran}(H) \subseteq I_{M}$, there exists an irreducible discrete copula $D$ on $I_{M}^{L}$ such that
\begin{equation*}
H(x_{1},\ldots,x_{L})=D(F_{1}(x_{1}),\ldots,F_{L}(x_{L})) \,\, \mbox{\,\,for \, \,} (x_{1},\ldots,x_{L}) \in \overline{\mathbb{R}}^{L}.
\end{equation*}
Furthermore, $D$ is uniquely determined if and only if $\mbox{Ran}(F_{\ell}) = I_{M}$ for all $\ell \in \{1,\ldots,L\}$.
}
\end{enumerate}
\end{theorem}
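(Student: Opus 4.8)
The plan is to follow the architecture of the classical proof of Sklar's theorem, splitting into the two implications and using the Extension Lemma (Lemma \ref{extension}) as the engine for the converse direction. The first part is a direct verification that the composition is a distribution function; the second part constructs a subcopula on the ranges of the marginals and then extends it.

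For the first part, I would verify directly that $H:=D(F_1(\cdot),\ldots,F_L(\cdot))$ satisfies all defining properties of an $L$-variate distribution function, transferring each copula axiom through the marginals. Groundedness comes from (D1): as $x_\ell\to-\infty$ one has $F_\ell(x_\ell)\to 0$, forcing $H\to 0$. Normalization and the correct marginals come from (D2): letting all but the $\ell$-th argument tend to $+\infty$ sends the other entries to $1$, so $H(\infty,\ldots,x_\ell,\ldots,\infty)=D(1,\ldots,F_\ell(x_\ell),\ldots,1)=F_\ell(x_\ell)$, and the choice $x_\ell\to+\infty$ yields $H\equiv 1$ in the limit. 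The $L$-increasing property is the only nontrivial point: since each $F_\ell$ is nondecreasing, the $H$-volume of a box $\prod_\ell[a_\ell,b_\ell]$ equals the $D$-volume of the (possibly degenerate) box $\prod_\ell[F_\ell(a_\ell),F_\ell(b_\ell)]$, which I would expand as a telescoping sum of unit-box increments and conclude nonnegativity from (D3). Right-continuity is inherited because each $F_\ell$ is right-continuous and $D$ is a fixed function on the discrete grid, and $\mathrm{Ran}(H)\subseteq\mathrm{Ran}(D)=I_M$ since $D$ is irreducible.

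For the converse, put $J_M^{(\ell)}:=\mathrm{Ran}(F_\ell)$ (which contains $0$ and $1$ and lies in $I_M$) and define $D^{\ast}:\prod_\ell J_M^{(\ell)}\to I_M$ by $D^{\ast}(F_1(x_1),\ldots,F_L(x_L)):=H(x_1,\ldots,x_L)$. The first genuine task is well-definedness: I must show that $H$ is unchanged when a single argument $x_\ell$ is moved within a level set of $F_\ell$. This follows from the bound $0\le H(\ldots,x_\ell',\ldots)-H(\ldots,x_\ell,\ldots)\le F_\ell(x_\ell')-F_\ell(x_\ell)$ for $x_\ell\le x_\ell'$ (the increment of a distribution function in one coordinate, with the others held fixed, is nonnegative and dominated by the corresponding marginal increment), so equal marginal values force equal $H$-values. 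Once $D^{\ast}$ is well defined, I would check (S1)--(S3) by restating the groundedness, marginal, and $L$-increasing properties of $H$ in terms of the marginal values; crucially, $\mathrm{Ran}(H)\subseteq I_M$ guarantees that $D^{\ast}$ takes values in $I_M$, i.e. is an irreducible discrete subcopula, which is precisely the hypothesis needed to invoke Lemma \ref{extension}. Applying the Extension Lemma then yields an irreducible discrete copula $D$ on $I_M^L$ with $D|_{\prod_\ell J_M^{(\ell)}}=D^{\ast}$, whence $D(F_1(x_1),\ldots,F_L(x_L))=D^{\ast}(F_1(x_1),\ldots,F_L(x_L))=H(x_1,\ldots,x_L)$ for every $(x_1,\ldots,x_L)$.

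Finally, for the uniqueness statement I would argue that the identity $H=D(F_1,\ldots,F_L)$ pins down $D$ exactly on $\prod_\ell\mathrm{Ran}(F_\ell)$. If $\mathrm{Ran}(F_\ell)=I_M$ for every $\ell$, this product is all of $I_M^L$, so no extension is needed and $D=D^{\ast}$ is forced, giving uniqueness. Conversely, if some $\mathrm{Ran}(F_\ell)\subsetneq I_M$, then $D$ is unconstrained on the complementary grid points, and the non-uniqueness of the extension recorded in the remark following Lemma \ref{extension} (the existence of distinct largest and smallest extensions $D_{lar}\neq D_{sm}$) supplies different admissible copulas, so $D$ is not unique. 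The main obstacle I anticipate is the well-definedness step, together with the verification that $D^{\ast}$ meets the range hypothesis of the Extension Lemma; the remaining verifications are the routine transfer of the distribution-function axioms through the monotone marginals.
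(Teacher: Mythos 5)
Your proposal is correct and takes essentially the same route as the paper: for the converse you construct the subcopula $D^{\ast}$ on $\prod_{\ell}\mathrm{Ran}(F_{\ell})$, check well-definedness and (S1)--(S3), extend via Lemma \ref{extension}, and settle the uniqueness claim through the largest/smallest extensions $D_{lar}$, $D_{sm}$, exactly as the paper does. The only difference is cosmetic: for part 1 the paper simply invokes the classical Sklar theorem, whereas you verify the distribution-function axioms directly.
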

\begin{proof}$\hfill$
\begin{enumerate}
\item{This is just a special case of the common Sklar's theorem.}
\item{Let $H$ be a finite $L$-dimensional joint distribution function with $\mbox{Ran}(H)\subseteq I_{M}$ having one-dimensional marginal distribution functions $F_{1},\ldots,F_{L}$. Set
\begin{equation*}
J_{M}^{(\ell)} := \left\{\frac{i_{\ell}}{M} \in I_{M}\left| \frac{i_{\ell}}{M} \in \mbox{Ran}(F_{\ell})\right\} \supseteq \{0,1\}\right.
\end{equation*}
for all $\ell \in \{1,\ldots,L\}$ and define
\begin{equation*}
D^{\ast}:J_{M}^{(1)} \times \cdots \times J_{M}^{(L)} \rightarrow I_{M}, \, D^{\ast}\left(\frac{i_{1}}{M},\ldots,\frac{i_{L}}{M}\right):=H(x_{1},\ldots,x_{L}),
\end{equation*}
where $x_{\ell}$ satisfies $F_{\ell}(x_{\ell})=\frac{i_{\ell}}{M}$ for all $\ell \in \{1,\ldots,L\}$. We now show that $D^{\ast}$ is indeed an irreducible discrete subcopula. First, $\mbox{Ran}(H) \subseteq I_{M}$ by assumption, and $D^{\ast}$ is well-defined, due to the well-known fact that $H(x_1,\ldots,x_L)=H(y_1,\ldots,y_L)$ for points $(x_1,\ldots,x_L) \in \overline{\mathbb{R}}^L$ and $(y_1,\ldots,y_L) \in \overline{\mathbb{R}}^L$ such that $F(x_\ell)=F(y_\ell)$ for all $\ell \in \{1,\ldots,L\}$. Furthermore, the axioms (S1), (S2) and (S3) for discrete subcopulas in Definition \ref{mdsubcop} are fulfilled:
\begin{itemize}
\item[(S1)]{Let $i_{\ell}=0$ for an $\ell \in \{1,\ldots,L\}$. Then, $D^{\ast}(\frac{i_{1}}{M},\ldots,\frac{i_{\ell-1}}{M},0,\frac{i_{\ell+1}}{M},\ldots,\frac{i_{L}}{M})=H(x_{1},\ldots,x_{L})$ with $F_{\ell}(x_{\ell})=\frac{0}{M}=0$ and $F_{k}(x_{k})=\frac{i_{k}}{M}$ for all $k \in \{1,\ldots,L\} \setminus \{\ell\}$. However, $F_{\ell}(x_{\ell}) = H(\infty,\ldots,\infty,x_{\ell},\infty,\ldots,\infty) = 0$, and since $H$ is non-decreasing in each argument, we have $H(x_{1},\ldots,x_{\ell},\ldots,x_{L})=0$, and hence $D^{\ast}(\frac{i_{1}}{M},\ldots,\frac{i_{\ell-1}}{M},0,\frac{i_{\ell+1}}{M},\ldots,\frac{i_{L}}{M}) = 0$ for all $\frac{i_{k}}{M} \in J_{M}^{(k)}$, $k \in \{1,\ldots,L\} \setminus \{\ell\}$. Clearly, this is also true if $i_{\ell}=0$ for two or more $\ell \in \{1,\ldots,L\}$.}
\item[(S2)]{$D^{\ast}(1,\ldots,1,\frac{i_{\ell}}{M},1,\ldots,1)=H(x_{1},\ldots,x_{L})$ with $F_{\ell}(x_{\ell})=\frac{i_{\ell}}{M}$ and $F_{k}(x_{k})=1$ for all $k \in \{1,\ldots,L\} \setminus \{\ell\}$. Set $x_{k}:=\infty$ for all $k \in \{1,\ldots,L\} \setminus \{\ell\}$. Then,  $D^{\ast}(1,\ldots,1,\frac{i_{\ell}}{M},1,\ldots,1)=H(\infty,\ldots,\infty,x_{\ell},\infty,\ldots,\infty)=F_{\ell}(x_{\ell})=\frac{i_{\ell}}{M}$ for all $\frac{i_{\ell}}{M} \in J_{M}^{(\ell)}$.}
\item[(S3)]{To show that $D^{\ast}$ is $L$-increasing, we use the $L$-increasingness of $H$ as a finite distribution function and obtain
\begin{equation*}
\Delta_{i_{L}}^{j_{L}} \cdots \Delta_{i_{1}}^{j_{1}} D^{\ast}\left(\frac{k_{1}}{M},\ldots,\frac{k_{L}}{M}\right)=\Delta_{i_{L}}^{j_{L}} \cdots \Delta_{i_{1}}^{j_{1}} H(u_{1},\ldots,u_{L}) \geq 0
\end{equation*}
with $\frac{j_{\ell}}{M} \geq \frac{i_{\ell}}{M}$ for all $\frac{i_{\ell}}{M} \in J_{M}^{(\ell)}$ and $\frac{j_{\ell}}{M} \in J_{M}^{(\ell)}$, where $F_{\ell}(x_{\ell})=\frac{i_{\ell}}{M}$ and $F_{\ell}(y_{\ell})=\frac{j_{\ell}}{M}$ for all $x_{\ell} \in \overline{\mathbb{R}}$, $y_{\ell} \in \overline{\mathbb{R}}$ and $\ell \in \{1,\ldots,L\}$. This means that $D^{\ast}$ is $L$-increasing.
}
\end{itemize}
Thus, $D^{\ast}$ is indeed a subcopula.\\
According to Lemma \ref{extension}, $D^{\ast}$ can therefore be extended to a discrete copula $D$, which obviously satisfies
\begin{equation*}
D(F_{1}(x_{1}),\ldots,F_{L}(x_{L})) = D\left(\frac{i_{1}}{M},\ldots,\frac{i_{L}}{M}\right) = D^{\ast} \left(\frac{i_{1}}{M},\ldots,\frac{i_{L}}{M}\right) = H(x_{1},\ldots,x_{L})
\end{equation*}
for $(x_{1},\ldots,x_{L}) \in \overline{\mathbb{R}}^{L}$. Hence, $H(x_{1},\ldots,x_{L})=D(F_{1}(x_{1}),\ldots,F_{L}(x_{L}))$.\\
The last issue left to prove is that $D$ is uniquely determined if and only if $\mbox{Ran}(F_{\ell})=I_{M}$ for all $\ell \in \{1,\ldots,L\}$. Assume that $D^{\ast}:J_{M}^{(1)} \times \cdots \times J_{M}^{(L)} \rightarrow I_{M}$ can be extended in only a single way to a discrete copula $D$. Then, due to the unique extension, we have $D_{lar}=D=D_{sm}$, where $D_{lar}$ and $D_{sm}$ denote the largest and the smallest discrete extension copulas, respectively. However, this only holds if $J_{M}^{(1)}=\ldots=J_{M}^{(L)}$, that is, if $\mbox{Ran}(F_{\ell})=I_{M}$ for all $\ell \in \{1,\ldots,L\}$. Conversely, if $\mbox{Ran}(F_{\ell})=I_{M}$ for all $\ell \in \{1,\ldots,L\}$, then the discrete subcopula $D^{\ast}$ has domain $I_{M}^{L}$, and thus we have $D=D^{\ast}$.\qedhere}
\end{enumerate}
\end{proof}

\noindent Theorem \ref{sklarmdc} is especially tailored to and suitable for situations in which dealing with empirical copulas of data with no ties matters. This is for instance the case in the ECC approach proposed by \cite{Schefzik&2013}, which is discussed in the following.

\section{Ensemble copula coupling: An application of multivariate discrete copulas in meteorology}\label{applecc}
\noindent In this section, we relate our concepts to the ensemble copula coupling 
(ECC) approach of \cite{Schefzik&2013}, which is a multivariate statistical 
postprocessing technique for ensemble weather forecasts, and deepen the theoretical considerations in Section 4.2 in \cite{Schefzik&2013}.\\
In state of the art meteorological practice, weather forecasts are derived from ensemble prediction systems, which comprise multiple runs of numerical weather prediction models differing in the initial conditions and/or in details of the parameterized numerical representation of the 
atmosphere \citep{GneitingRaftery2005}. However, ensemble forecasts often reveal biases and dispersion errors. It is thus common that they get statistically postprocessed in order to correct these shortcomigs. Ensemble predictions and their postprocessing lead to probabilistic forecasts in form of predictive probability distributions 
over future weather quantities,
where forecast distributions of good quality are characterized 
by sharpness
subject to calibration \citep{Gneiting2007}. Several 
ensemble postprocessing methods have been proposed, yet many of them, such as Bayesian model averaging (BMA; \cite{Raftery&2005}) or ensemble model output statistics (EMOS; \cite{gnetal2005}), only 
apply to a single weather
quantity at a single location for a single prediction horizon. In 
many applications, however, it is crucial to account for spatial, temporal and 
inter-variable dependence
structures, as in air traffic management or ship routeing, for instance.\\
To address this, ECC as introduced by \cite{Schefzik&2013} offers a simple yet powerful tool, which in a nutshell performs as follows: For each weather variable $i \in \{1,\ldots,I\}$, location $j \in \{1,\ldots,J\}$ and prediction horizon $k \in \{1,\ldots,K\}$ separately, we are given the $M$ forecasts $x_{1}^{\ell},\ldots,x_{M}^{\ell}$ of the original unprocessed raw ensemble, where $\ell:=(i,j,k)$, $\ell \in \{1,\ldots,L\}$, and $L=I \times J \times K$. For each fixed $\ell$, let $\sigma_{\ell}(m):=\operatorname{rk}(x_{m}^{\ell})$ for $m \in \{1,\ldots,M\}$ be the permutation of $\{1,\ldots,M\}$ induced by the order statistics $x_{(1)}^{\ell} \leq \ldots \leq x_{(M)}^{\ell}$ of the raw ensemble, with any ties resolved at random. In a first step, we employ state-of-the-art univariate postprocessing methods such as BMA or EMOS to obtain calibrated and sharp predictive cdfs $F_{X_{\ell}}$, $\ell \in \{1,\ldots,L\}$, for each variable, location and look-ahead time individually. Then, we draw $M$ samples $\tilde{x}_{1}^{\ell},\ldots,\tilde{x}_{M}^{\ell}$ from $F_{X_{\ell}}$ for each $\ell \in \{1,\ldots,L\}$. This can be done, for instance, by taking the equally spaced $\frac{m-\frac{1}{2}}{M}$--quantiles, where $m \in \{1,\ldots,M\}$, of each predictive cdf $F_{X_{\ell}}$, $\ell \in \{1,\ldots,L\}$. In the final ECC step, the $M$ samples (quantiles) are rearranged with respect to the ranks the ensemble members are assigned within the raw ensemble in order to retain the spatial, temporal and inter-variable rank dependence structure and to capture the flow dependence of the raw ensemble.
\begin{figure}[p]
\centering
\subfiguretopcaptrue
\subfigure[Raw Ensemble]{\includegraphics[scale=0.35]{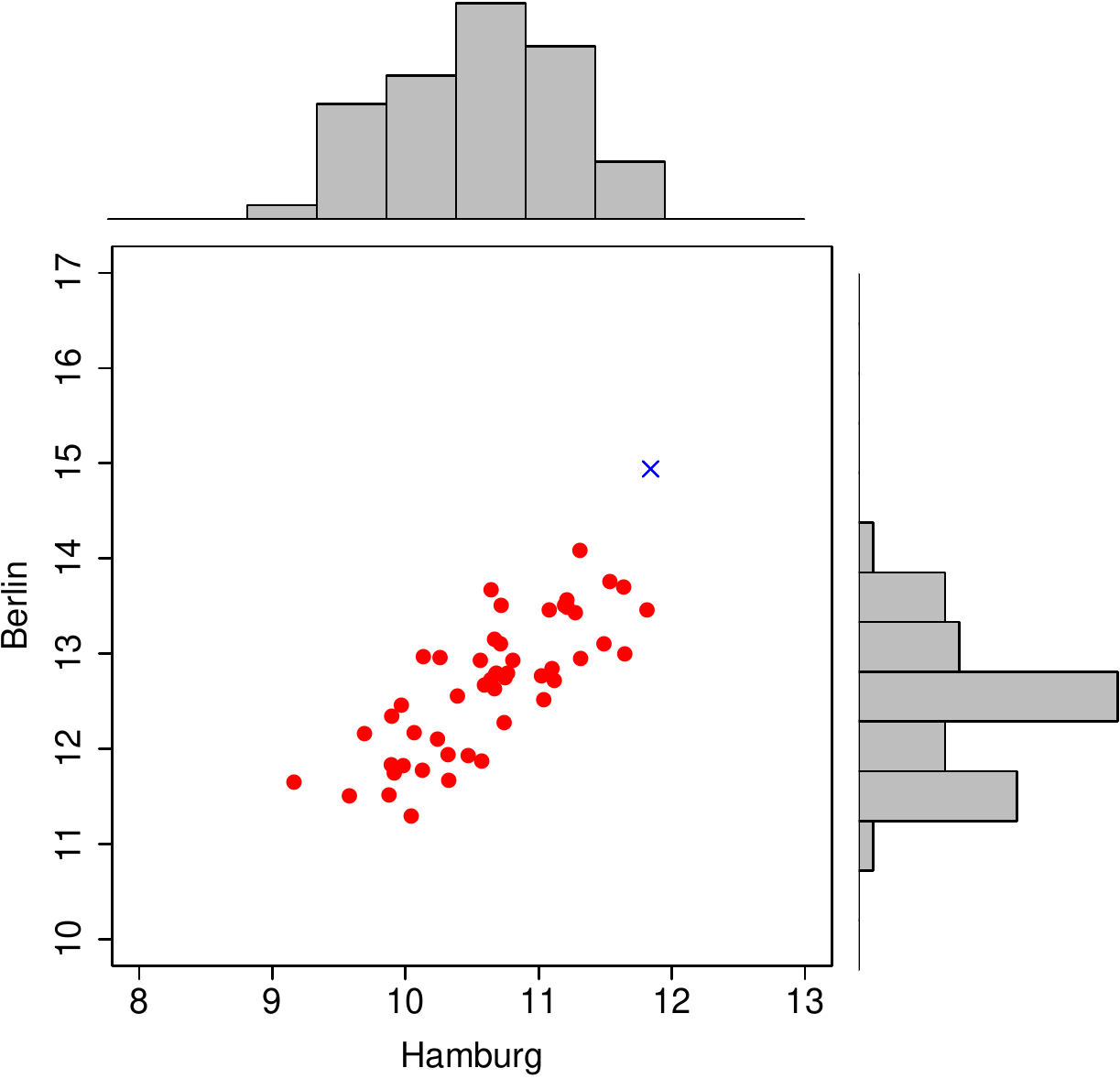}}
\subfigure[Independently Postprocessed Ensemble]{\includegraphics[scale=0.35]{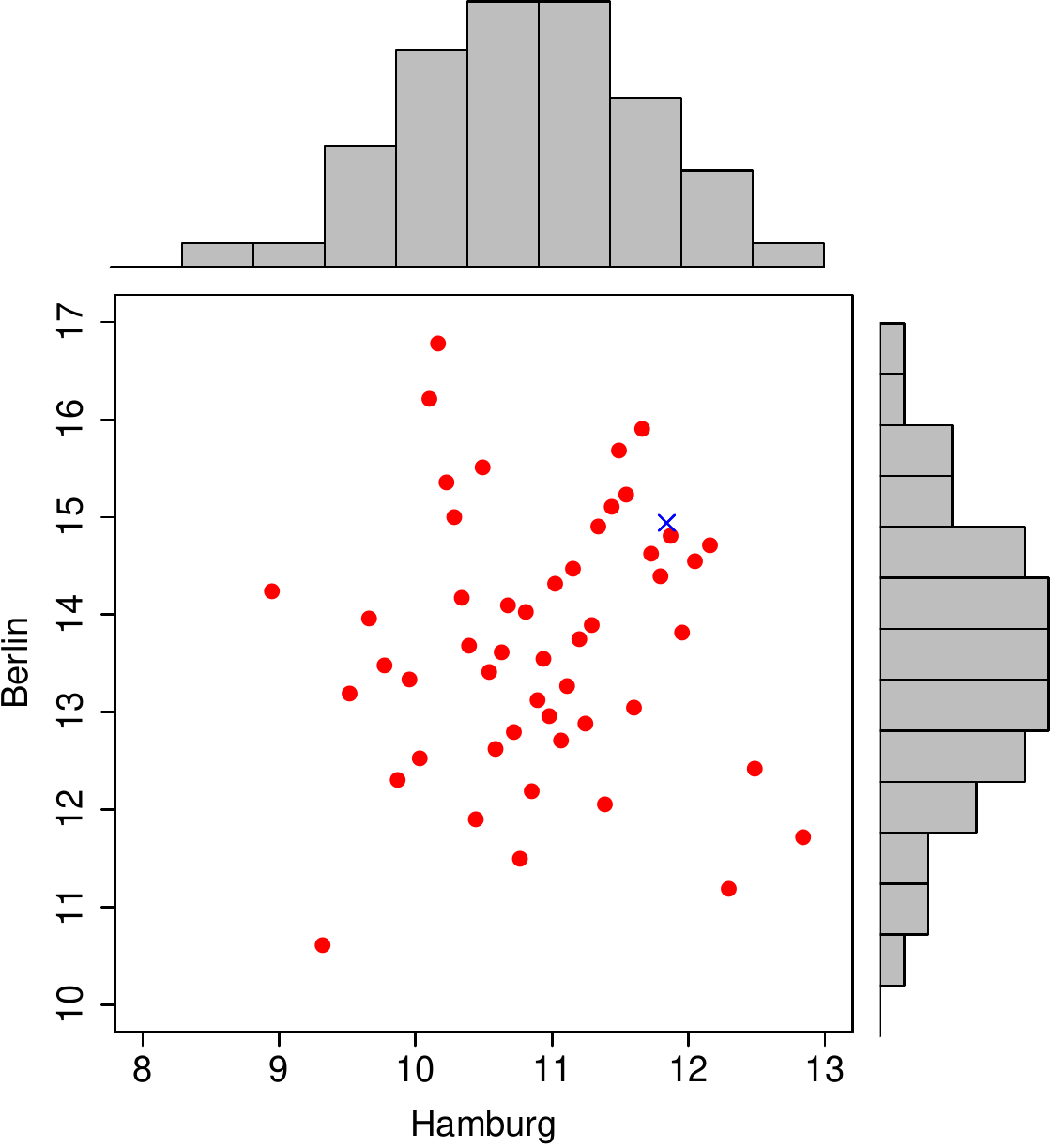}}
\subfigure[ECC Ensemble]{\includegraphics[scale=0.35]{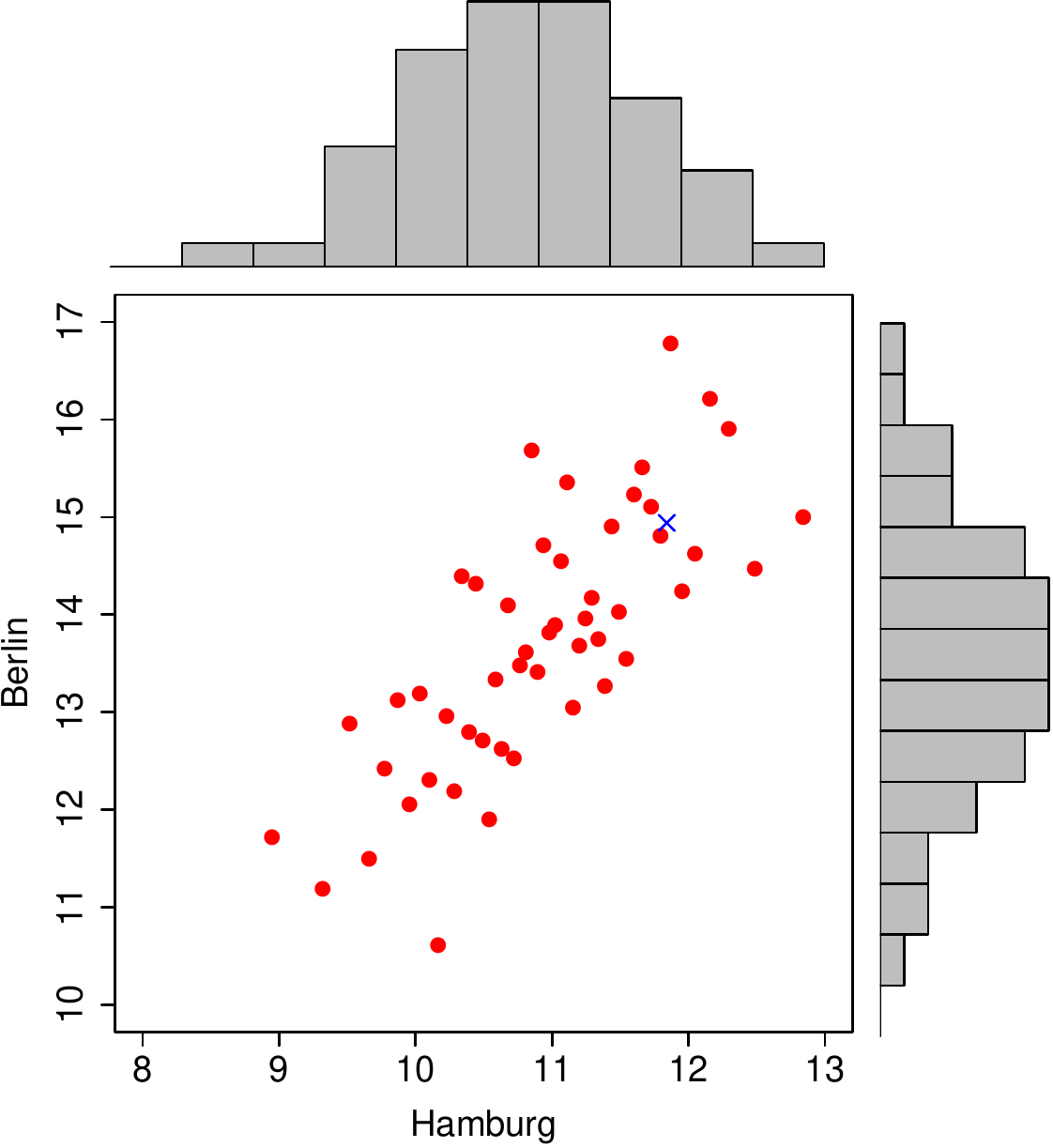}}\newline
\subfigure{\includegraphics[scale=0.35]{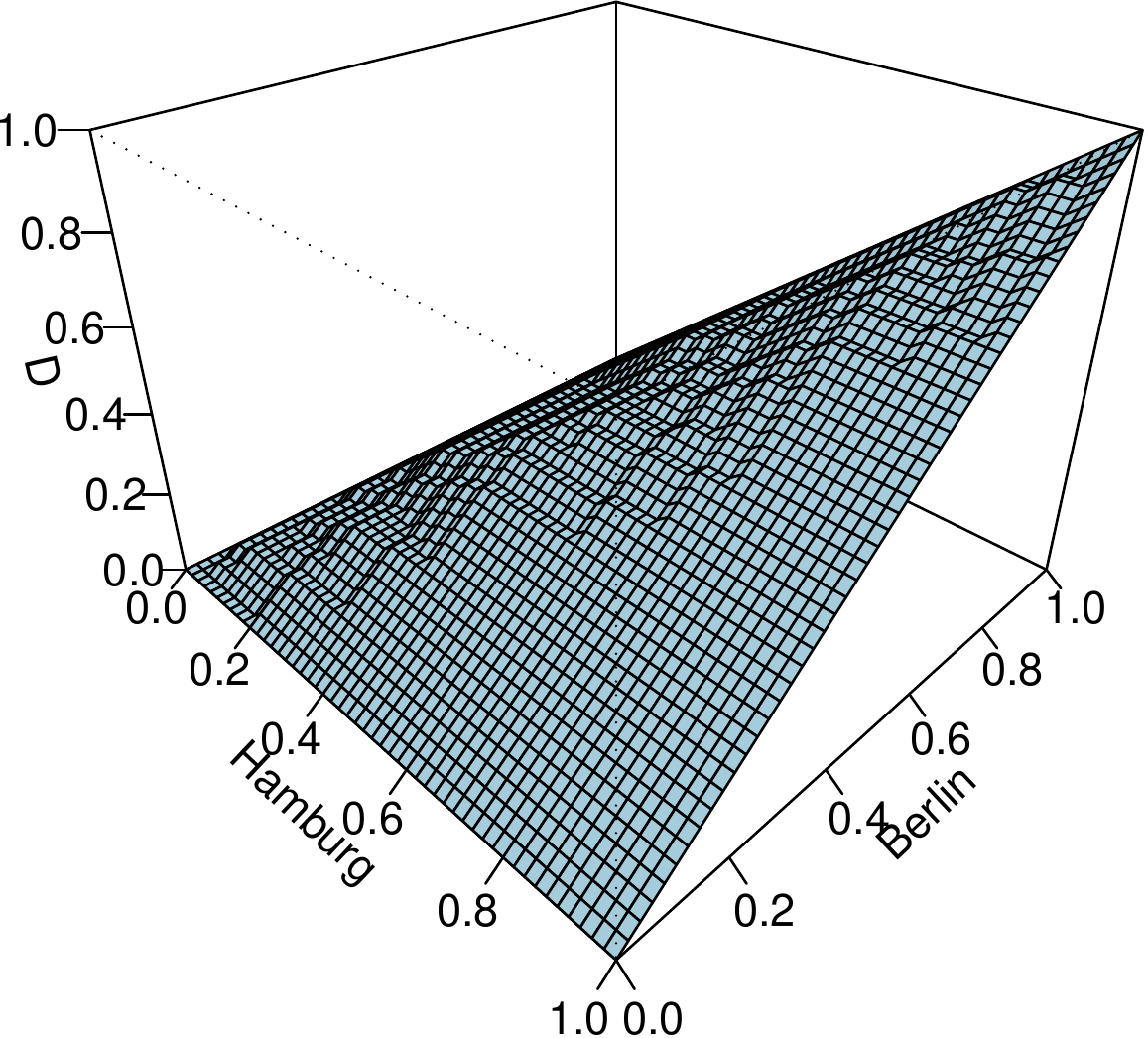}}
\subfigure{\includegraphics[scale=0.35]{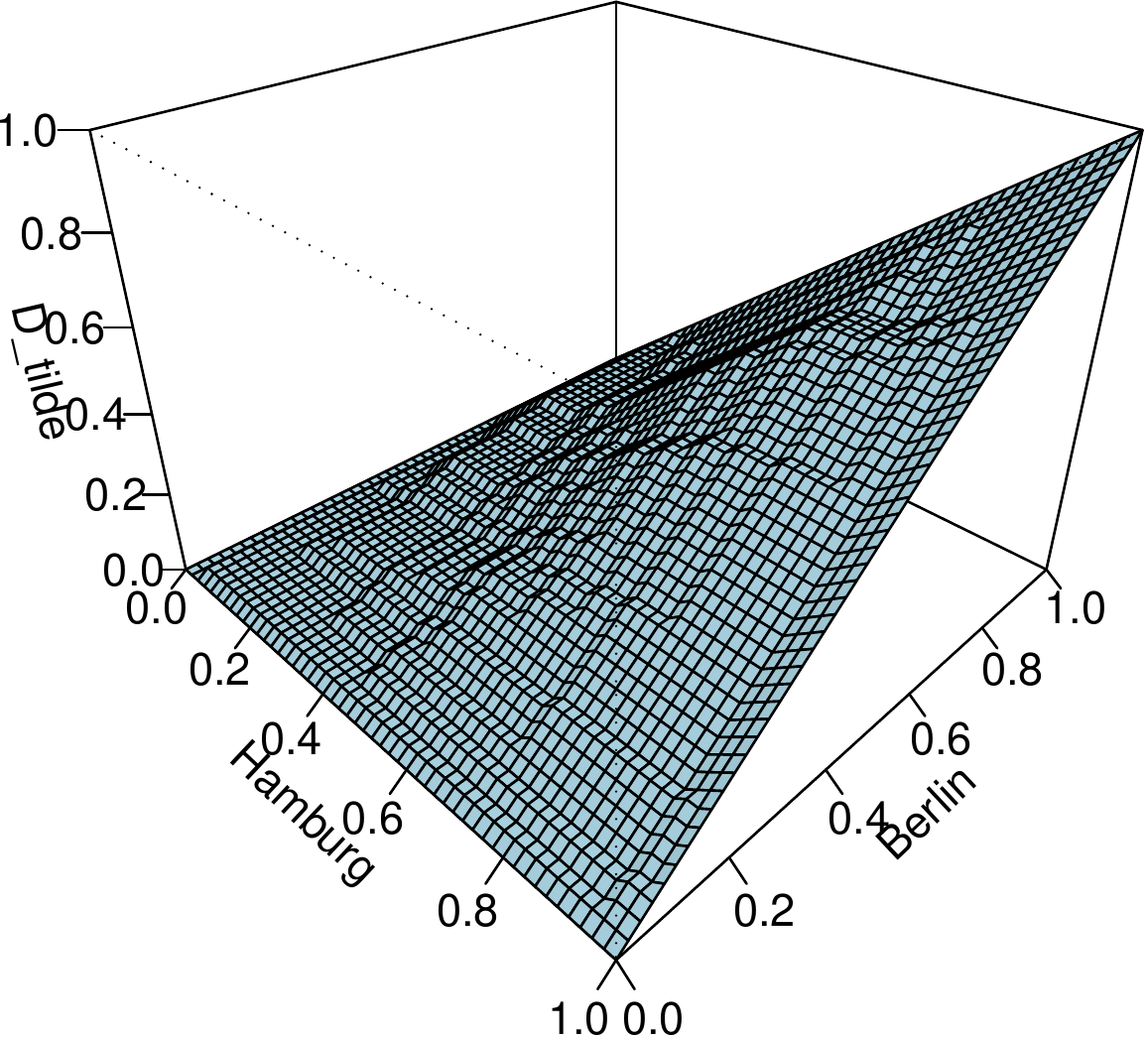}}
\subfigure{\includegraphics[scale=0.35]{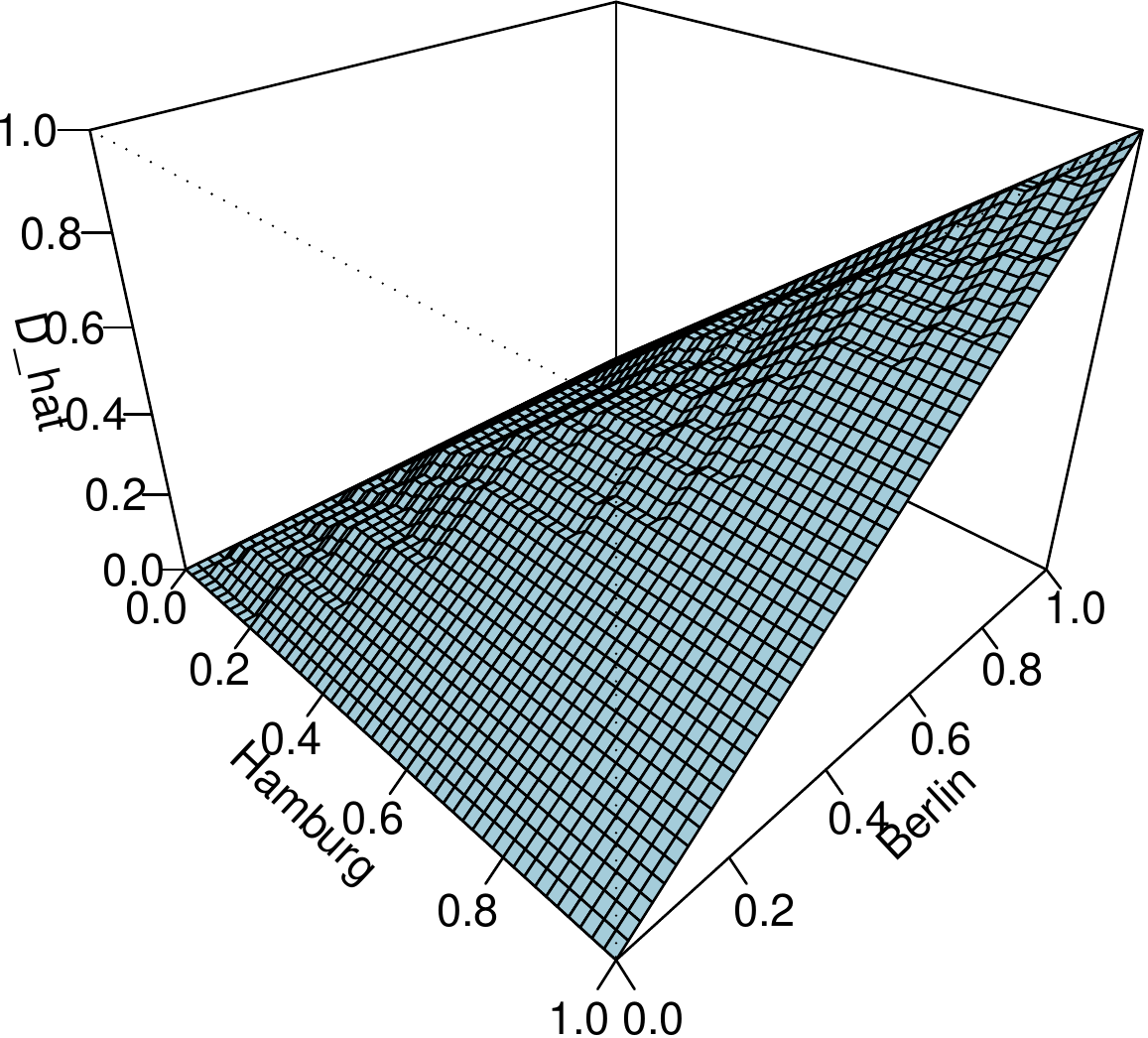}}\newline
\subfigure{\includegraphics[scale=0.35]{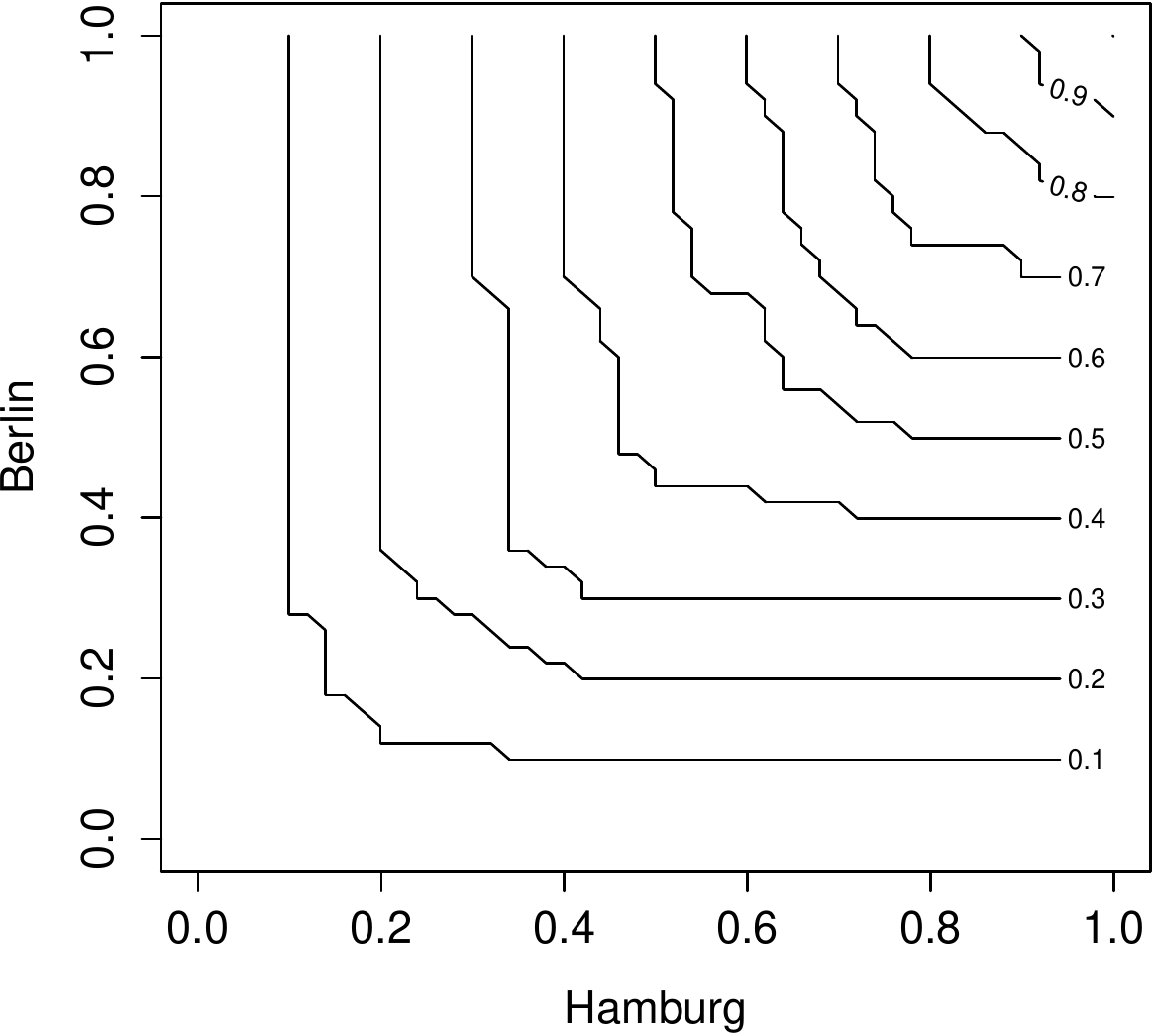}}
\subfigure{\includegraphics[scale=0.35]{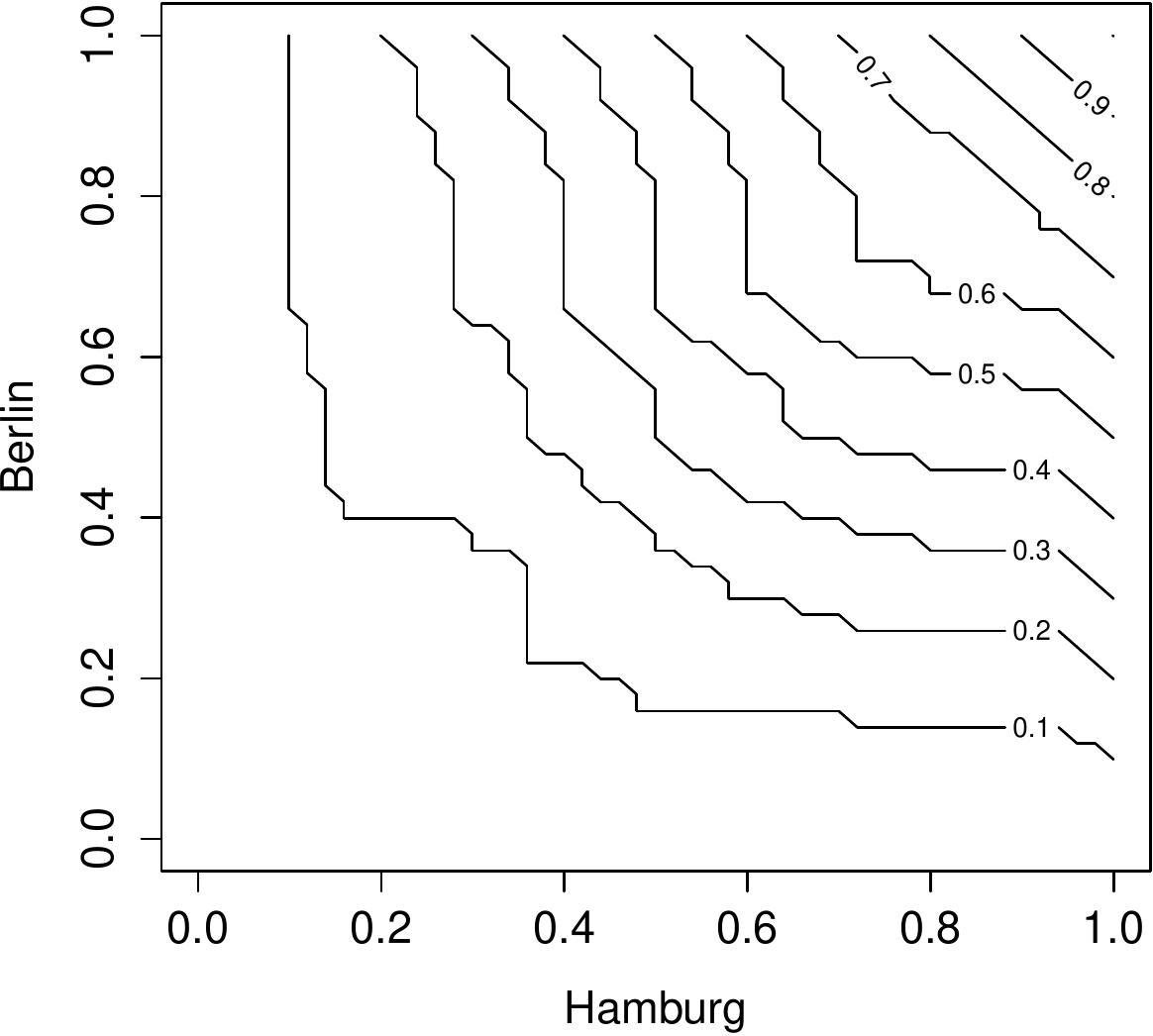}}
\subfigure{\includegraphics[scale=0.35]{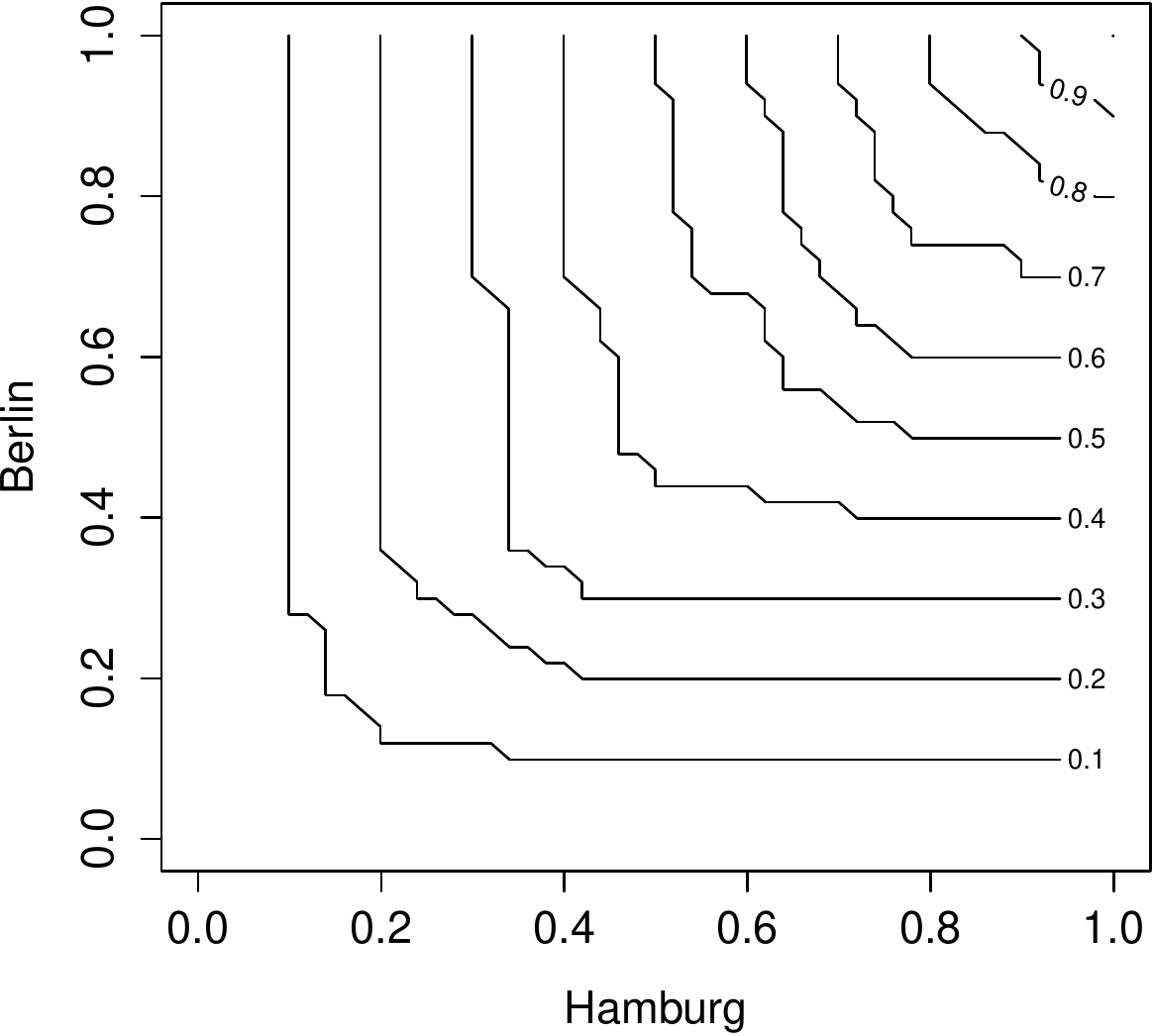}}\newline
\subfigure{\includegraphics[scale=0.35]{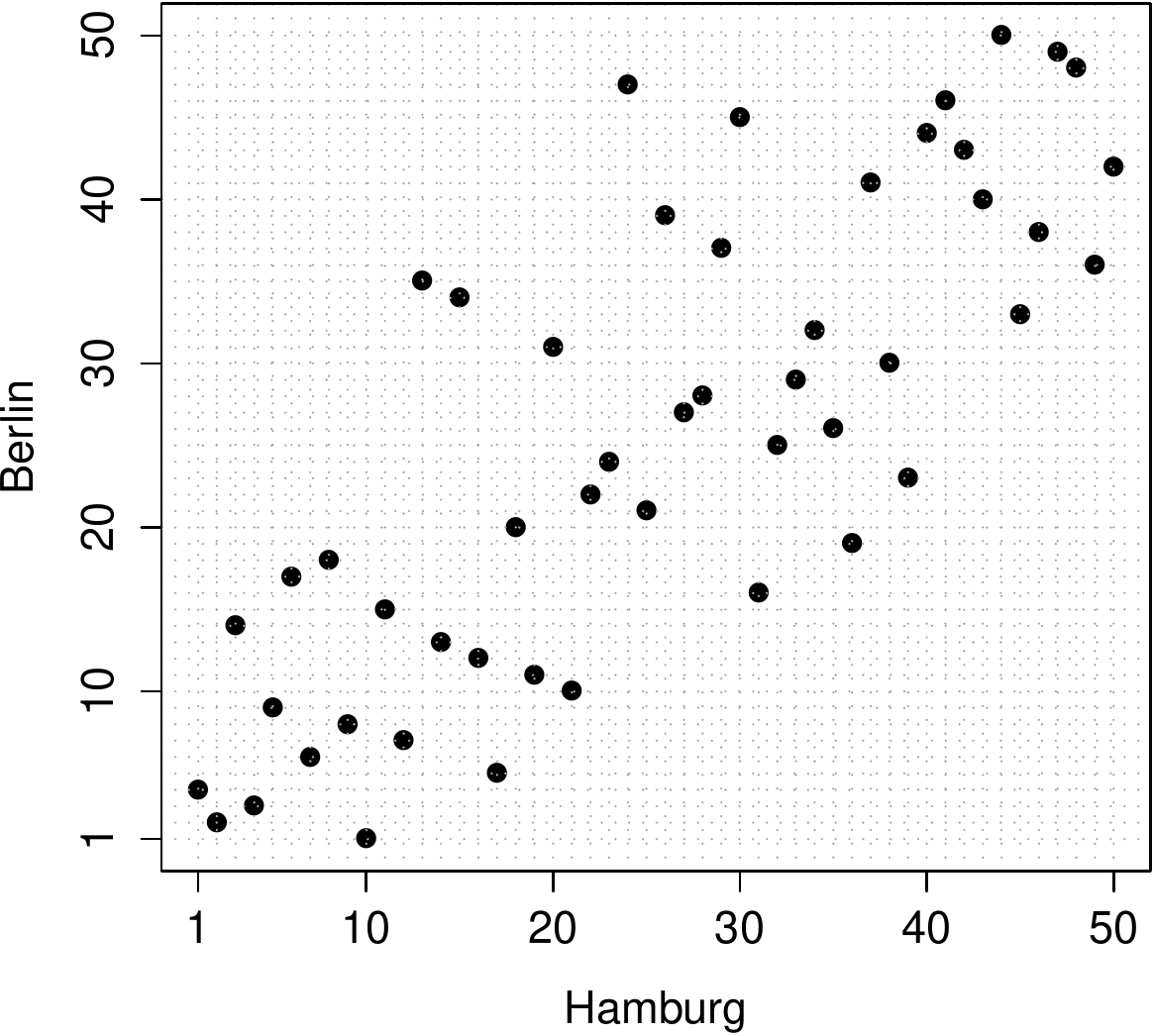}}
\subfigure{\includegraphics[scale=0.35]{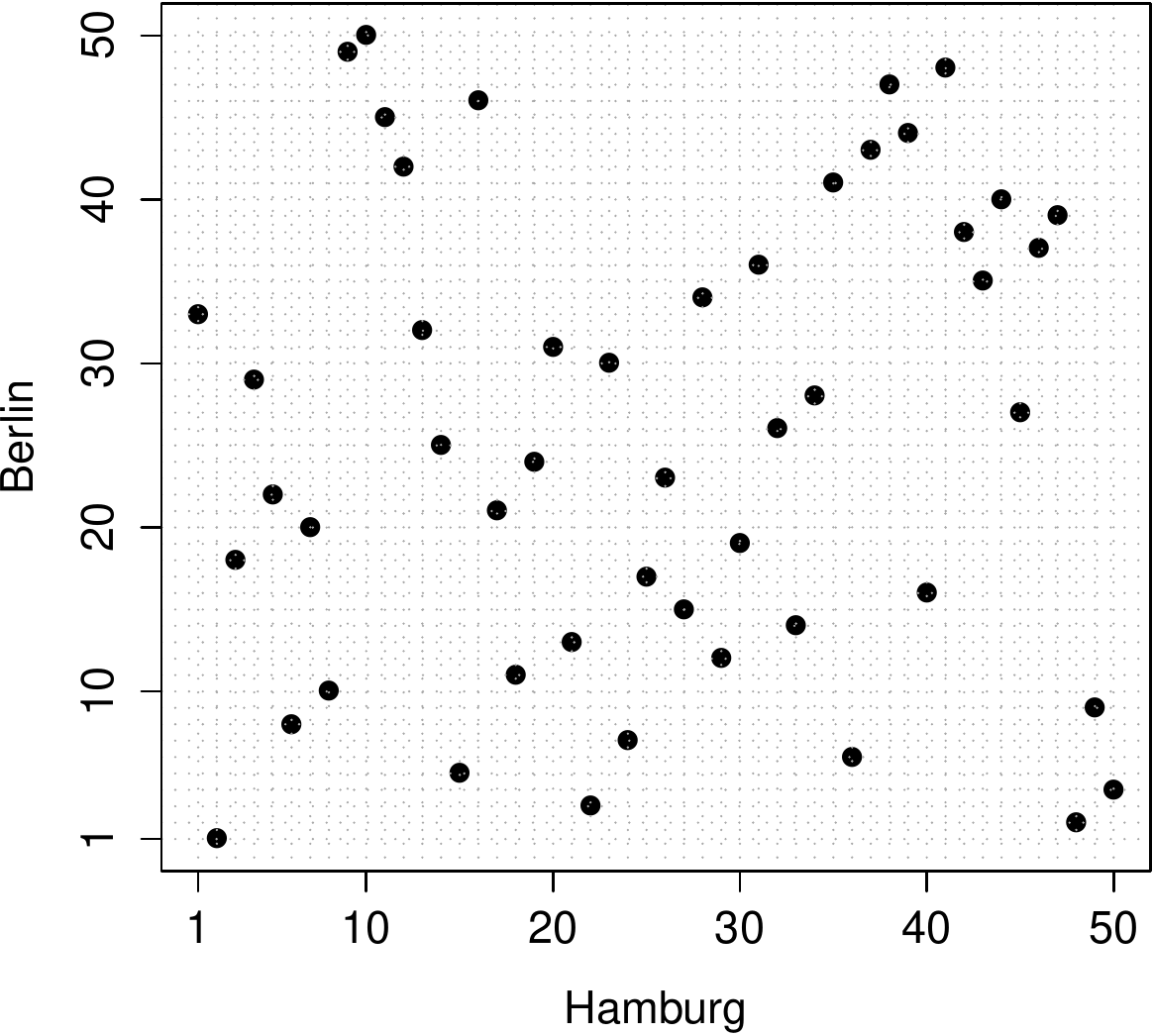}}
\subfigure{\includegraphics[scale=0.35]{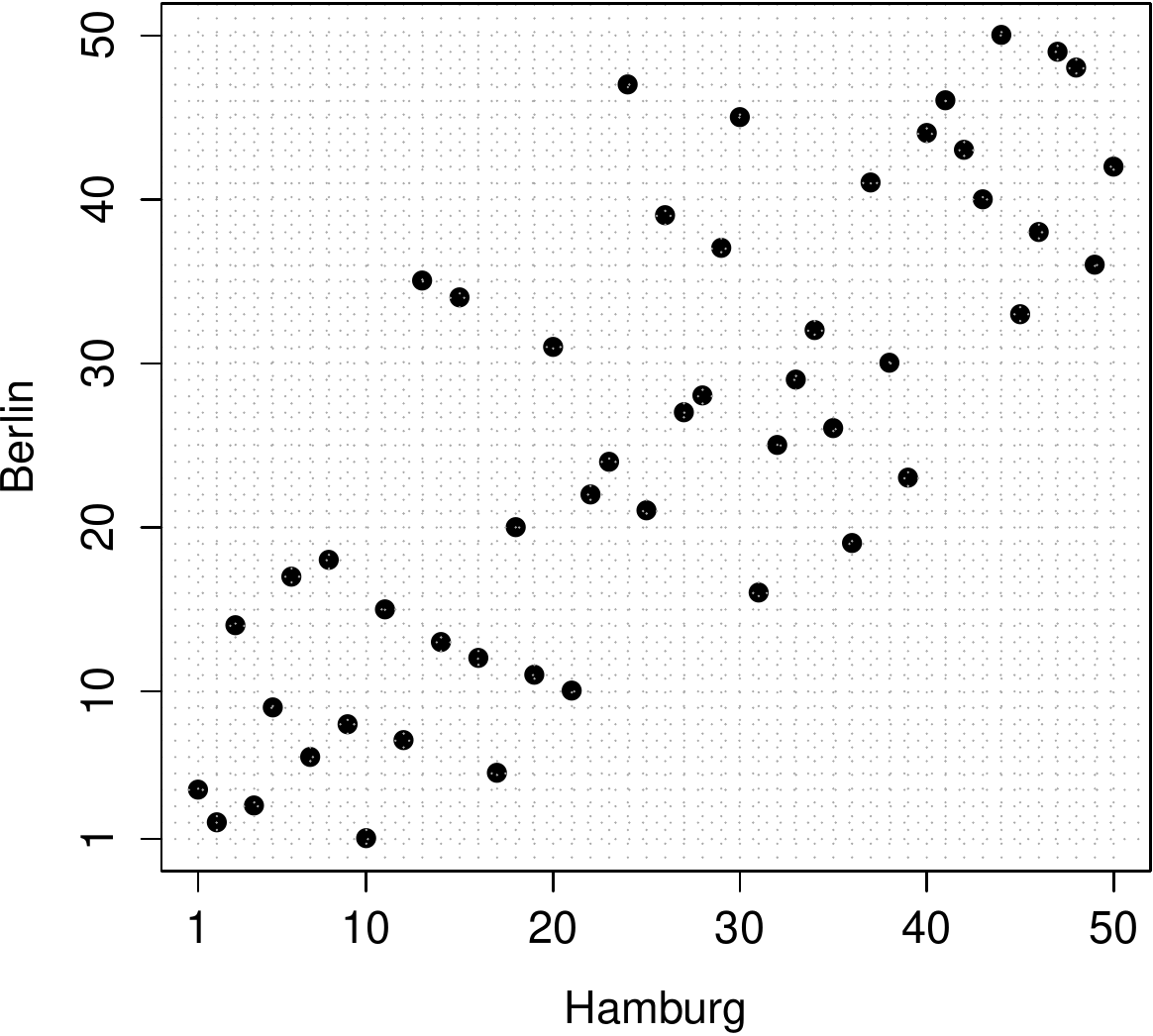}}\newline
\caption{Different ensemble prediction approaches comprising the (a) raw, (b) independently postprocessed and (c) ECC ensemble. First row: Scatterplots with corresponding marginal histograms of 24 hour ahead temperature 
forecasts (in °C) at Berlin and Hamburg, valid on 27 June 2010, 2:00 am local time, based on the 
50-member ECMWF ensemble. The red dots show the forecasts of the corresponding ensemble members, and the 
verifying observation is indicated by the blue cross. Second row: Perspective plots of the corresponding empirical copulas. Third row: Contour plots of the corresponding empirical copulas. Fourth row: Corresponding Latin squares.}
\label{ecc}
\end{figure}
That is, the ECC ensemble consists of $\hat{x}_{1}^{\ell}:=\tilde{x}_{(\sigma_{\ell}(1))}^{\ell},\ldots,\hat{x}_{M}^{\ell}:=\tilde{x}_{(\sigma_{\ell}(M))}^{\ell}$ for each $\ell \in \{1,\ldots,L\}$.\\
\cite{Schefzik&2013} and \cite{schuhen} show in several case studies that ECC is well-performing in the sense that the ECC ensemble exhibits better stochastic characteristics than the unprocessed raw ensemble.\\
Although our concepts have been discussed for the general multivariate case, we now consider for illustrative purposes a bivariate ($L=2$) example in the first row of Figure \ref{ecc}, with 24 hour ahead forecasts for temperature at Berlin and Hamburg,  based on the $M=50$-member European Centre for Medium-Range Weather Forecasts (ECMWF) ensemble \citep{molteni,buizza} and valid on 27 June 2010, 2:00 am local time. Univariate postprocessing is performed by BMA. In the left panel of the first row, the unprocessed raw ensemble forecasts are shown, while the plot in the middle presents the independently site-by-site postprocessed ensemble, in which the bivariate rank order characteristics of the unprocessed forecasts from the left pattern are lost, even though biases and dispersion errors have been corrected. Finally, the postprocessed ECC ensemble in the right panel corrects biases and dispersion errors as the independently postprocessed ensemble does, but also takes account of the rank dependence structure given by the raw ensemble.\\
As indicated by its name, ECC has strong connections to copulas, particularly to the notions and results presented before, which is hinted at by \cite{Schefzik&2013} and is investigated in more detail in what follows.\\
To this end, we stick to the notation employed above and let  
$X_{1},\ldots,X_{L}$ be discrete random variables that can take values in
$\{x_{1}^{1},\ldots,x_{M}^{1}\},\ldots,\{x_{1}^{L},\ldots,x_{M}^{L}\}$,  
respectively, where $x_{1}^{\ell},\ldots,x_{M}^{\ell}$ are the $M$  
raw ensemble forecasts for fixed $\ell \in \{1,\ldots,L\}$, that is, for fixed  
weather quantity, location and look-ahead time.  
For convenience, we assume that there are no ties among the  
corresponding values. Considering the multivariate random vector  
$\boldsymbol{X}:=(X_{1},\ldots,X_{L})$, the margins  
$X_{1},\ldots,X_{L}$ are uniformly distributed with step $\frac{1}{M}$,  
and their corresponding univariate cdfs  
$F_{X_1},\ldots,F_{X_L}$ hence take values in $I_{M}$, that is,  
$\mbox{Ran}(F_{X_{1}})=\ldots=\mbox{Ran}(F_{X_{L}})=I_{M}$. Moreover, we have $\mbox{Ran}(H)=I_{M}$ for the multivariate cdf $H$ of $\boldsymbol{X}$. According to the multivariate discrete  
version of Sklar's theorem tailored to the ECC framework here, compare Theorem \ref{sklarmdc}, there exists a uniquely  
determined irreducible discrete copula $D:I_{M}^{L} \rightarrow I_{M}$
such that
\begin{equation*}
H(u_{1},\ldots,u_{L}) = D(F_{X_1}(u_{1}),\ldots,F_{X_L}(u_{L})) \mbox{\,\,for\,\,} (u_{1},\ldots,u_{L}) \in \overline{\mathbb{R}}^{L},
\end{equation*}
that is, the multivariate distribution is connected to its univariate
margins via $D$.\\
Following and generalizing the statistical interpretation of discrete  
copulas for the bivariate case by \cite{Mesiar2005},
\begin{equation*}
D\left(\frac{i_{1}}{M},\ldots,\frac{i_{L}}{M}\right)=\mathbb{P}(H \in  
[-\infty,u_{1}]  \times \cdots \times [-\infty,u_{L}]),
\end{equation*}
where $u_{1},\ldots,u_{L} \in \overline{\mathbb{R}}$ such that $F_{X_1}(u_{1})=\mathbb{P}(X_{1} \leq u_{1})=\frac{i_{1}}{M},\ldots,F_{X_L}(u_{L})=\mathbb{P}(X_{L} \leq u_{L})=\frac{i_{L}}{M}$, that is, $D\left(\frac{i_{1}}{M},\ldots,\frac{i_{L}}{M}\right)=\mathbb{P}(X_{1} \leq u_{1},\ldots,X_{L} \leq u_{L})$. To describe the discrete probability distribution of the random vector $\boldsymbol{X}$, we set $\alpha_{i_{1} \ldots i_{L}} := \mathbb{P}(X_{1}=x_{(i_{1})}^{1},\ldots,X_{L}=x_{(i_{L})}^{L})$, where $x_{(i_{\ell})}^{\ell}$, $\ell \in \{1,\ldots,L\}$, denote the corresponding order statistics from samples describing the values of $X_{1},\ldots,X_{L}$. Obviously, $\alpha_{i_{1} \ldots i_{L}} \in \left\{0,\frac{1}{M}\right\}$ for all $i_{1},\ldots,i_{L} \in \{1,\ldots,M\}$. Hence, $a_{i_{1} \ldots i_{L}}:=M \alpha_{i_{1} \ldots i_{L}} \in \{0,1\}$ for $i_{1},\ldots,i_{L} \in \{1,\ldots,M\}$, $A:=(a_{i_{1} \ldots i_{L}})_{i_{1},\ldots,i_{L}=1}^{M}$ is a permutation array, and
\begin{equation*}
\frac{1}{M} \sum\limits_{j_{1}=1}^{i_{1}} \cdots \sum\limits_{j_{L}=1}^{i_{L}} a_{j_{1} \ldots j_{L}} = D \left(\frac{i_{1}}{M},\ldots,\frac{i_{L}}{M}\right),
\end{equation*}
which is in accordance to Theorem \ref{equiv}.
\\
Analogously, the same considerations hold for both the independently  
postprocessed ensemble consisting of the $M$ samples  
$\tilde{x}_{1}^{\ell},\ldots,\tilde{x}_{M}^{\ell}$, $\ell \in \{1,\ldots,L\}$,  
and the ECC ensemble $\hat{x}_{1}^{\ell},\ldots,\hat{x}_{M}^{\ell}$,  
$\ell \in \{1,\ldots,L\}$, that is,
\begin{equation*}
\tilde{H}(u_{1},\ldots,u_{L}) =  
\tilde{D}(F_{\tilde{X}_{1}}(u_{1}),\ldots,F_{\tilde{X}_{L}}(u_{L}))  
\mbox{\,\,\,\,for\,\,\,\,} (u_{1},\ldots,u_{L}) \in \overline{\mathbb{R}}^{L}
\end{equation*}
and
\begin{equation*}
\hat{H}(u_{1},\ldots,u_{L}) =  
\hat{D}(F_{\hat{X}_{1}}(u_{1}),\ldots,F_{\hat{X}_{L}}(u_{L})) \mbox{\,\,\,\,for\,\,\,\,} (u_{1},\ldots,u_{L}) \in \overline{\mathbb{R}}^{L},
\end{equation*}
in obvious notation. Although both the independently postprocessed and the ECC ensemble  
have the same marginal distributions, that is,  
$F_{\tilde{X}_{1}}=F_{\hat{X}_{1}},\ldots,F_{\tilde{X}_{L}}=F_{\hat{X}_{L}}$, as is illustrated by the marginal histograms in the first row of our example in Figure \ref{ecc}, they differ drastically in their multivariate rank dependence  
structure. Since the ECC ensemble is designed in  
the manner that it inherits the rank dependence pattern from the raw  
ensemble, the considerations above yield that $D=\hat{D}$. Thus, the raw and the ECC ensemble are associated with the same irreducible  
multivariate discrete copula modeling the dependence. This is visualized in the second and third row of Figure \ref{ecc}, where the perspective and contour plots, respectively, of the empirical copulas linked to the different ensembles in our illustrative example are shown, both suitably indicating rather high dependence. On the other hand, the perspective and contour plots of the empirical copula associated with the independently postprocessed ensemble in the mid-panel of Figure \ref{ecc} are not far away from those of the independence copula $\Pi$ introduced in Section \ref{basics}. According to the equivalences in Section \ref{dcstar}, the raw and the ECC ensemble are also related to the same Latin square of order $M=50$, as can be seen in the fourth row in Figure \ref{ecc}.\\
Hence, ECC indeed can be  
considered as a copula approach, as it comes up with a postprocessed, discrete $L$-dimensional distribution, which is by Theorem \ref{sklarmdc} constructed from the $L$ univariate predictive cdfs $F_{X_1},\ldots,F_{X_L}$ obtained by the postprocessing and the empirical copula $D$ induced by the unprocessed raw ensemble. Conversely, each multivariate distribution with fixed univariate margins yields a uniquely determined empirical copula $D$, which defines the rank dependence structure in our setting.\\
Although several multivariate copula-based methods for discrete data have been proposed, for example recently by \cite{Panagiotelis&2012} using vine and pair copulas, we feel that our discrete copula approach still provides an appropriate and useful alternative to these methods. ECC is especially valuable when being faced with extremely high-dimensional data, as is the case in weather forecasting, where one has to deal with several millions of variables. Since its crucial reordering step is computationally non-expensive, one of the major advantages of ECC is that it practically comes for free, once the univariate postprocessing is done. However, BMA and EMOS as univariate postprocessing methods are already implemented efficiently in the R packages \texttt{ensembleBMA} and \texttt{ensembleMOS}, respectively, which are freely available at \url{http://cran.r-project.org}. Hence, with the discrete copula-based non-parametric ECC approach, we can circumvent the problems that arise when using parametric methods, such as computational unfeasibility. In addition, ECC offers a simple and intuitive, yet powerful technique that goes without complex modeling or sophisticated parameter fitting in multivariate copula models, which work well in comparably low dimensional settings \citep{moeller,SchoelzelFriederichs2008}, but tend to fail in very high dimensions. The notion of discrete copulas arises naturally in the context of the ECC approach. Furthermore, as documented in Section 4.4 in \cite{Schefzik&2013}, the discrete copula notion presented in the paper at hand can be interpreted as an overarching concept and theoretical frame not only for ECC, but also for other ensemble postprocessing methods that have recently appeared in the meteorological literature, and applies in other settings as well.

\section*{Acknowledgments}
\noindent This work has been supported by the Volkswagen Foundation under the 
"Mesoscale Weather Extremes: Theory, Spatial Modeling and Prediction 
(WEX-MOP)" project, which is gratefully acknowledged. Moreover, 
the author thanks Tilmann Gneiting, Thordis 
Thorarinsdottir and two anonymous reviewers of an earlier version of the paper for providing helpful comments, hints and suggestions 
in the course of the development of the work at hand.

\bibliographystyle{plainnat}
\bibliography{biblioecc}

\end{document}